\documentclass[acmsmall,nonacm]{acmart}
\pdfoutput=1

\acmJournal{PACMPL}

\setcopyright{none}

\usepackage{amsmath}
\AtBeginDocument{\let\cite\citep}
\usepackage{mathtools}
\usepackage{stmaryrd}
\usepackage{array}
\usepackage{enumitem}
\usepackage{listings}
\usepackage{tabularx}
\usepackage{tikz}
\usetikzlibrary{arrows.meta, fit, backgrounds, calc, positioning, decorations.pathreplacing}
\usepackage{fontawesome5}
\microtypesetup{expansion=false}

\newcolumntype{L}[1]{>{\raggedright\arraybackslash}p{#1}}
\newcolumntype{Y}{>{\raggedright\arraybackslash}X}

\newcommand{\leanmonofont}{\ttfamily}

\definecolor{leanbg}{RGB}{248,248,250}
\definecolor{leanframe}{RGB}{220,223,230}
\definecolor{teal}{RGB}{0,128,128}
\definecolor{leankw}{RGB}{38,84,124}
\definecolor{leancomment}{RGB}{90,118,84}
\definecolor{leanstring}{RGB}{165,84,44}
\definecolor{leanident}{RGB}{56,56,56}
\definecolor{leanaccent}{RGB}{124,44,107}

\lstdefinelanguage{Lean4}{
  keywords={theorem, lemma, def, structure, inductive, class, instance, where,
    import, open, namespace, section, variable, abbrev, private, protected,
    scoped, attribute, local, omit, include, axiom, sorry, by, exact, simp,
    simpa, rw, rfl, cases, induction, apply, intro, ext, refine, constructor,
    have, let, show, calc, match, with, if, then, else, do, return, fun,
    noncomputable, termination_by, decreasing_by, Type, Prop, Sort},
  sensitive=true,
  morecomment=[l]{--},
  morecomment=[s]{/-}{-/},
  morestring=[b]",
  morestring=[b]',
}
\lstdefinestyle{leanbase}{
  language=Lean4,
  basicstyle=\leanmonofont\small\color{leanident},
  keywordstyle=\bfseries\color{leankw},
  commentstyle=\itshape\color{leancomment},
  stringstyle=\color{leanstring},
  rulecolor=\color{leanframe},
  backgroundcolor=\color{leanbg},
  numbers=left,
  numberstyle=\tiny\color{gray},
  numbersep=8pt,
  keepspaces=true,
  showstringspaces=false,
  breaklines=true,
  columns=flexible,
  escapeinside={(*@}{@*)},
  frame=single,
  framerule=0.4pt,
  framesep=6pt,
  xleftmargin=1.2em,
  xrightmargin=0.8em,
  aboveskip=0.8em,
  belowskip=0.8em,
  literate=
    {ℕ}{{$\mathbb{N}$}}1
    {ℂ}{{$\mathbb{C}$}}1
    {ℝ}{{$\mathbb{R}$}}1
    {ℤ}{{$\mathbb{Z}$}}1
    {ℚ}{{$\mathbb{Q}$}}1
    {��}{{\textbf{q}}}1
    {⊢}{{$\vdash$}}1
    {⊨}{{$\models$}}1
    {ₜ}{{$_{\mathrm{t}}$}}1
    {ₚ}{{$_{\mathrm{p}}$}}1
    {ᵢ}{{$_{i}$}}1
    {ₗ}{{$_{l}$}}1
    {ₘ}{{$_{m}$}}1
    {ₖ}{{$_{k}$}}1
    {ₙ}{{$_{n}$}}1
    {ᵥ}{{$_{v}$}}1
    {₀}{{$_{0}$}}1
    {₁}{{$_{1}$}}1
    {₂}{{$_{2}$}}1
    {₃}{{$_{3}$}}1
    {⁺}{{$^{+}$}}1
    {⁻}{{$^{-}$}}1
    {ⁿ}{{$^{n}$}}1
    {ᶜ}{{$^{\mathsf{c}}$}}1
    {σ}{{$\sigma$}}1
    {ρ}{{$\rho$}}1
    {δ}{{$\delta$}}1
    {φ}{{$\varphi$}}1
    {ψ}{{$\psi$}}1
    {α}{{$\alpha$}}1
    {β}{{$\beta$}}1
    {γ}{{$\gamma$}}1
    {ε}{{$\varepsilon$}}1
    {κ}{{$\kappa$}}1
    {ι}{{$\iota$}}1
    {λ}{{$\lambda$}}1
    {μ}{{$\mu$}}1
    {π}{{$\pi$}}1
    {Φ}{{$\Phi$}}1
    {Σ}{{$\Sigma$}}1
    {ℓ}{{$\ell$}}1
    {��}{{$\mathcal{N}$}}1
    {⟦}{{$\llbracket$}}1
    {⟧}{{$\rrbracket$}}1
    {⟪}{{$\langle\!\langle$}}1
    {⟫}{{$\rangle\!\rangle$}}1
    {⟨}{{$\langle$}}1
    {⟩}{{$\rangle$}}1
    {∣}{{$|$}}1
    {‖}{{$\|$}}1
    {∥}{{$\|$}}1
    {→}{{$\to$}}1
    {↔}{{$\leftrightarrow$}}1
    {↦}{{$\mapsto$}}1
    {⇒}{{$\Rightarrow$}}1
    {∀}{{$\forall$}}1
    {∃}{{$\exists$}}1
    {∞}{{$\infty$}}1
    {≠}{{$\neq$}}1
    {≤}{{$\leq$}}1
    {≥}{{$\geq$}}1
    {≃}{{$\simeq$}}1
    {∘}{{$\circ$}}1
    {∧}{{$\wedge$}}1
    {∨}{{$\vee$}}1
    {¬}{{$\neg$}}1
    {∅}{{$\emptyset$}}1
    {∈}{{$\in$}}1
    {∉}{{$\notin$}}1
    {⊆}{{$\subseteq$}}1
    {∑}{{$\sum$}}1
    {∏}{{$\prod$}}1
    {√}{{$\surd$}}1
    {×}{{$\times$}}1
    {⊗}{{$\otimes$}}1
    {⨂}{{$\bigotimes$}}1
    {⊤}{{$\top$}}1
    {⊥}{{$\bot$}}1
    {†}{{$\dagger$}}1
    {★}{{$\star$}}1
    {⋆}{{$\star$}}1
    {▸}{{$\blacktriangleright$}}1
    {•}{{$\bullet$}}1
    {·}{{$\cdot$}}1
    {…}{{$\dots$}}1
    {—}{{--}}1
    {q̄}{{$\bar{q}$}}1
    {̄}{{\textasciimacron}}1
}
\lstdefinestyle{leanexcerpt}{
  style=leanbase,
  basicstyle=\leanmonofont\footnotesize\color{leanident},
}
\lstdefinestyle{leanappendix}{
  style=leanbase,
  basicstyle=\leanmonofont\scriptsize\color{leanident},
}
\newcommand{\Tr}{\mathrm{Tr}}
\newcommand{\qhl}{QHL$^{+}$}
\newcommand{\qhlpar}{QHL$^{+}_{\mathit{par}}$}
\newcommand{\qhltot}{QHL$^{+}_{\mathit{tot}}$}
\newcommand{\ket}[1]{\lvert #1\rangle}
\newcommand{\bra}[1]{\langle #1\rvert}
\newcommand{\ketbra}[1]{\lvert #1\rangle\!\langle #1\rvert}
\newcommand{\inv}{\mathcal{J}}  %
\newcommand{\invL}{\mathcal{L}} %
\newcommand{\QFT}{\mathit{QFT}}
\newcommand{\IQFT}{\mathit{IQFT}}
\newcommand{\QPE}{\mathit{QPE}}
\newcommand{\PH}{\mathit{PH}}
\newcommand{\SWAP}{\mathit{SWAP}}
\newcommand{\CNOT}{\mathit{CNOT}}
\newcommand{\CZ}{\mathit{CZ}}
\newcommand{\cR}{\mathit{cR}}
\newcommand{\kw}[1]{\mathbf{#1}}
\newcommand{\cU}{\mathit{cU}}

\newcommand{\ctrl}[1]{\mathit{c}\text{-}#1}  %
\newcommand{\RUS}{\mathit{RUS}}
\newcommand{\LCU}{\mathit{LCU}}
\newcommand{\QSP}{\mathit{QSP}}
\newcommand{\Trotter}{\mathit{Trotter}}

\newcommand{\Ph}{\mathit{Ph}}
\newcommand{\phasegate}{\mathit{phase}}

\usepackage{xltabular}
\newlength{\pooutrule}
\newlength{\pooutpad}
\newlength{\pooutinner}%
\newcommand{\pooutgaprow}{%
  \multicolumn{3}{|@{}c@{}|}{\rule{0pt}{\pooutpad}}\\}
\newcommand{\poouttitle}{}%
\newcommand{\poouttitlerow}{%
  \multicolumn{3}{|@{}c@{}|}{%
    \rule[-5pt]{0pt}{17pt}\normalsize\bfseries\poouttitle}\\
  \noalign{\hrule height 0.4pt}}
\newenvironment{proofoutline}[2]{%
  \par\phantomsection\label{#1}%
  \gdef\poouttitle{#2}%
  \setlength{\LTpre}{\bigskipamount}%
  \setlength{\LTpost}{\bigskipamount}%
}{\ignorespacesafterend}

\newlength{\qindentlen}
\newcommand{\qindent}[1]{%
  \hspace*{\dimexpr #1\qindentlen\relax}%
}
\newcommand{\qlinenum}[1]{%
  \leavevmode\hspace*{\dimexpr-\pooutpad+0.5em\relax}{\footnotesize$\mathbf{#1}$}%
}
\newcommand{\qline}[4]{%
  \qlinenum{#1} &
  \qindent{#2}$#3$ &
  #4 \\[0.8pt]
}
\newcommand{\qassert}[3]{%
  {} &
  \qindent{#1}$\{\,#2\,\}$ &
  #3 \\[0.8pt]
}

\newenvironment{outlinerows}[1][\footnotesize]{%
  #1%
  \setlength{\arrayrulewidth}{\pooutrule}%
  \setlength{\pooutinner}{\dimexpr\linewidth-2\pooutpad-2\pooutrule\relax}%
  \xltabular{\linewidth}{%
    |@{\hspace{\pooutpad}}%
    >{\raggedright\arraybackslash}p{1.25em}%
    >{\raggedright\arraybackslash}X%
    >{\raggedright\arraybackslash}p{0.44\pooutinner}%
    @{\hspace{\pooutpad}}|}%
  \hline
  \poouttitlerow
  \pooutgaprow
  \endfirsthead
  \hline
  \multicolumn{3}{|@{\hspace{\pooutpad}}l@{\hspace{\pooutpad}}|}{%
    \rmfamily\footnotesize\itshape(continued)}\\
  \pooutgaprow
  \endhead
  \pooutgaprow
  \hline
  \endfoot
  \pooutgaprow
  \hline
  \endlastfoot
}{\endxltabular}

\title{QEDIA: Quantum Program Verification in Lean}

\author{Yuanjie (Collin) Ren}
\authornote{These authors contributed equally to this work.}
\affiliation{%
  \institution{Massachusetts Institute of Technology}
  \city{Cambridge}
  \state{MA}
  \country{USA}}
\affiliation{%
  \institution{Atomic Inference Inc}
  \city{Cambridge}
  \state{MA}
  \country{USA}}
\email{yuanjie@mit.edu}

\author{Zhicheng Zhang}
\authornotemark[1]
\affiliation{%
  \department{School of Information and Communication Technology}
  \institution{Griffith University}
  \city{Brisbane}
  \country{Australia}}
\email{iszczhang@gmail.com}

\author{Junyi Liu}
\authornotemark[1]
\affiliation{%
  \department{Joint Center for Quantum Information and Computer Science}
  \institution{University of Maryland}
  \city{College Park}
  \state{MD}
  \country{USA}}

\author{Jinzheng Li}
\affiliation{%
  \department{Department of Physics}
  \institution{Northeastern University}
  \city{Boston}
  \state{MA}
  \country{USA}}

\author{Yidi Qi}
\affiliation{%
  \institution{Harvard University}
  \city{Cambridge}
  \state{MA}
  \country{USA}}

\author{Mingsheng Ying}
\affiliation{%
  \department{Centre for Quantum Software and Information}
  \institution{University of Technology Sydney}
  \city{Sydney}
  \country{Australia}}
\email{Mingsheng.Ying@uts.edu.au}

\begin{document}

\begin{abstract}

Practical quantum programs interleave quantum and classical computation: classical data shape the quantum operations that are performed, and measurement outcomes shape the classical computation that follows. Formal verification of such programs in a proof assistant requires mechanizing a program logic that handles this interplay consistently at every level, from the mathematics of quantum mechanics, through program syntax and semantics, to specifications and proof rules. Such mechanizations, along with the program proofs built on them, typically require substantial manual effort.

We present QEDIA, a framework for verifying quantum programs with classical variables in Lean~4. At its foundation is an open library that formalizes the quantum Hoare logic \qhl{} across all these levels and proves the logic sound for both partial and total correctness. We construct the library and the program proofs, under human supervision, using an LLM-based prover agent, whose general-purpose core we extend for quantum program verification. We illustrate QEDIA with verified programs including the quantum Fourier transform, phase estimation, repeat-until-success circuits, and Hamiltonian simulation.
\end{abstract}
\maketitle
\tableofcontents
\section{Introduction}\label{sec:intro}

Classical software engineering relies on testing and debugging: running code,
inspecting intermediate states, and fixing bugs. For quantum programs, this
practice is constrained by features of quantum mechanics, notably: measurement
generally changes the state of a system, and the no-cloning
theorem~\cite{WoottersZurek1982,Dieks1982} forbids copying an unknown
state for offline analysis. Methods for assessing quantum operations, such
as quantum process
tomography~\cite{ChuangNielsen1997,MohseniRezakhaniLidar2008} and
randomized benchmarking~\cite{Knill2008RB}, are either expensive or report
only average behavior. Formal verification, which
proves by mathematical reasoning that a program satisfies its
specification, does not require running the program. It is widely used
for safety- and security-critical classical software and provides a
principled route to correctness guarantees for quantum programs.

The same physics that constrains testing and debugging also reshapes the
semantics a verifier must formalize. The programs considered here combine
classical and quantum variables and operations. The classical--quantum interplay in such programs runs in both directions:
classical variables determine the control flow, the gates applied, the qubits they act on, and their
parameters; conversely, a measurement writes its outcome into the classical
state, where it can influence the rest of the computation.  Reasoning about such programs is the
subject of a substantial body of work on quantum program
verification~\cite{Ying2012QHL,Liu2019QHLProver,Zhou2023,FengYing2021,Hietala2021SQIR,Chareton2021Qbricks,Li2024Qafny}.

Hoare logic is a cornerstone of classical program verification. Quantum Hoare logic~\cite{Ying2012QHL} was developed for purely quantum
programs and later formalized in Isabelle/HOL~\cite{Liu2019QHLProver} and
Coq~\cite{Zhou2023}; a subsequent extension~\cite{FengYing2021} added classical variables.
\qhl{}~\cite{Ying2025} extends this line of work with the features needed in
practice: quantum arrays, parameterized gates, richer predicate syntax, and
additional proof rules for reasoning about hybrid classical--quantum control.
Specifications take the form of quantum Hoare triples
$\{\varphi,A\}\,P\,\{\psi,B\}$, where the precondition $(\varphi,A)$ and the
postcondition $(\psi,B)$ each pair a classical formula with a quantum
predicate that may depend on classical variables.

To use \qhl{} for machine-checked program verification, we formalize it in
the Lean~4 proof assistant~\cite{deMoura2021Lean4}. The formalization spans the mathematics of
quantum mechanics, program syntax and semantics, and the proof rules, which
we prove sound. Mechanized verification of quantum programs, from the
underlying libraries to the program proofs, typically requires substantial
manual effort, as reflected in mature frameworks such as
CoqQ~\cite{Zhou2023} and Qbricks~\cite{Chareton2021Qbricks} and in the
certified end-to-end verification of Shor's algorithm~\cite{Peng2023Shor}.
Mechanizing \qhl{} is more demanding still, since the classical--quantum
interplay must be handled consistently at each level of the formalization, as the
examples in Section~\ref{sec:motivating} illustrate.

This proof-writing effort motivates our use of theorem-proving agents
that combine large language models (LLMs) with feedback from proof
assistants~\cite{baba2026proveragentagentbasedframework,varambally2026hilbertrecursivelybuildingformal,pmlr-v306-liu26gp,breen2026axproverdeepreasoningagentic,RenLiQi2026,li2026merleanproverrecursiveloopingharness}.
The general approach is to generate candidate formal proofs and revise
them using the proof assistant's diagnostics. In our work, LLM agents working under human supervision formalize the
quantum program logic, build its supporting library, and construct the example proofs, all checked by Lean's kernel. To our
knowledge, this is the first mechanized quantum program logic built by LLM
agents.

\subsection*{This Paper and Its Contributions}

This paper presents QEDIA, a framework for verifying quantum programs
with classical variables in Lean~4, based on \qhl{} and built on Mathlib~\cite{Mathlib2020}.\footnote{
\url{https://github.com/qedia/quantum-formal-verification}.}
Its contributions are:

\begin{itemize}[leftmargin=1.4em]
\item \textbf{A formalized quantum Hoare logic with proved soundness.}
  QEDIA formalizes \qhl{} for quantum while-programs with classical
  variables, quantum arrays, and parameterized gates. Its proof system is
  proved sound in Lean for both partial and total correctness.
  The formalization handles the classical--quantum interplay consistently at
  every level, from the mathematics of quantum mechanics to the proof system
  (Section~\ref{sec:faithful}).
\item \textbf{An agent-based proof-writing framework.}
  A general-purpose core agent coordinates LLM agents to construct and
  review formal proofs, combining Lean checks with review of the intended
  mathematics. Its quantum extension supports applying \qhl{} rules and
  adds quantum-specific checks (Section~\ref{sec:qpv}).
  We expect this workflow to be useful for future AI-assisted formal
  verification of quantum programs.
\item \textbf{A series of verified program examples.}
  We illustrate QEDIA with eleven verified programs: preparation of the
  Bell state and a 1D cluster state; parallel Hadamard; the quantum
  Fourier transform and its inverse; quantum phase estimation; a
  repeat-until-success circuit; Hamiltonian simulation of the
  transverse-field Ising model by a Trotter product formula; and the
  Bravyi--Kitaev encoding, block encoding, and quantum signal processing
  for the Fermi--Hubbard model (Section~\ref{sec:suite} and
  Appendix~\ref{app:outlines}).
  To the best of our knowledge, some of these examples cannot be directly
  handled by existing quantum program verifiers, since they involve the
  classical--quantum interplay at both the program and specification
  levels.
\end{itemize}

The remainder of the paper is organized as follows.
Section~\ref{sec:motivating} uses three motivating examples to identify the
levels at which the classical--quantum interplay must be handled.
Section~\ref{sec:logic} provides background on \qhl{}: its programs,
assertions, and proof rules. Section~\ref{sec:faithful} presents the Lean
formalization of \qhl{} and shows how it handles this interplay at each
level. Section~\ref{sec:qpv} describes the prover agent that constructs the
library and the program proofs. Section~\ref{sec:suite} is devoted to the verified program
examples, with detailed derivations for the examples of
Section~\ref{sec:motivating}. Section~\ref{sec:related} discusses related
work, and Section~\ref{sec:conclusion} concludes.
Appendix~\ref{app} gives further details of the formalization, and
Appendix~\ref{app:outlines} presents the remaining verified examples.

\section{Motivating Examples}\label{sec:motivating}
We now make the classical--quantum interplay described in Section~\ref{sec:intro} concrete. Through the three programs in Figure~\ref{fig:programs}, we illustrate the capabilities that a verifier needs in order to handle this interplay at different levels.

\begin{figure}[t]
\centering\footnotesize
\setlength{\tabcolsep}{0pt}
\begin{tabular}{@{}p{0.23\linewidth}@{\hspace{0.6em}}p{0.375\linewidth}@{\hspace{0.6em}}p{0.355\linewidth}@{}}
$\begin{array}{@{}l@{}}
m:=1;\ k:=0;\\
\kw{while}\ (m=1\wedge k<N)\ \kw{do}\\
\quad a:=\ket0;\\
\quad W_{\RUS}[a,d];\\
\quad m:=M[a];\\
\quad k:=k+1\\
\kw{od}.
\end{array}$
&
$\begin{array}{@{}l@{}}
i:=0;\\
\kw{while}\ i<n\ \kw{do}\\
\quad H[q[i]];\ j:=0;\\
\quad\kw{while}\ j<n-1-i\ \kw{do}\\
\quad\ \ \ \cR(2\pi/2^{j+2})[q[i],q[i+1+j]];\\
\quad\ \ \ j:=j+1\\
\quad\kw{od};\\
\quad i:=i+1\\
\kw{od};\\
i:=0;\\
\kw{while}\ i<\lfloor n/2\rfloor\ \kw{do}\\
\quad\SWAP[q[i],q[n-1-i]];\ i:=i+1\\
\kw{od}.
\end{array}$
&
$\begin{array}{@{}l@{}}
i:=0;\\
\kw{while}\ i<N\ \kw{do}\\
\quad j:=0;\\
\quad\kw{while}\ j<n\ \kw{do}\\
\quad\ \ \ R_x(2a[j]\delta)[q[j]];\ j:=j+1\\
\quad\kw{od};\\
\quad j:=0;\\
\quad\kw{while}\ j<n-1\ \kw{do}\\
\quad\ \ \ R_{zz}(2b[j]\delta)[q[j],q[j+1]];\\
\quad\ \ \ j:=j+1\\
\quad\kw{od};\\
\quad i:=i+1\\
\kw{od}.
\end{array}$
\\[5pt]
\centering\arraybackslash (a) $\RUS$ &
\centering\arraybackslash (b) $\QFT_n$ &
\centering\arraybackslash (c) $\Trotter_n$
\end{tabular}
\caption{The three motivating programs. Classical variables include loop
counters, measurement outcomes, and inputs; $q$ is a quantum array, and a quantum
operation acts on the qubits named in brackets.}
\label{fig:programs}
\end{figure}

\begin{example}[Repeat-Until-Success]\label{ssec:rus}
Our first example is a repeat-until-success (RUS) circuit. Such circuits use mid-circuit
measurements and classical feedback, in contrast to ordinary unitary synthesis, to reduce the number
of expensive $T$ gates in implementing certain non-Clifford unitary gates~\cite{PaetznickSvore2014,BocharovRoettelerSvore2015}.

The program in Figure~\ref{fig:programs}(a) presents an RUS implementation of the single-qubit unitary gate $V_{\RUS}=\frac{I+i\sqrt2 X}{\sqrt3}$, using a two-qubit gate $W_{\RUS}$ that is easier to construct from a few $T$ gates~\cite[Fig.~8]{PaetznickSvore2014}.
Here, $d$ is a data qubit, $a$ is an ancilla qubit, and $N$ is a classical input bounding the number of attempts. The variables $N,k$ are natural numbers, and the measurement outcome $m$ has type $\{0,1\}$.
The main component of the program is a $\mathbf{while}$ loop:
in each iteration we reset the ancilla qubit $a$ to $\ket{0}$,
apply gate $W_{\RUS}$, measure $a$ by the computational basis measurement $\{\ketbra 0,\ketbra 1\}$, and assign the measurement outcome to the classical variable $m$.
Given the specific $W_{\RUS}$, each iteration applies $V_{\RUS}$ to data qubit $d$ with probability $3/4$ (indicated by $m=0$)
and leaves $d$ unchanged with probability $1/4$ (indicated by $m=1$).
The loop stops when $m=0$ or the number of attempts reaches $N$.

The correctness of the RUS program in Figure~\ref{fig:programs}(a) consists of two parts:
the program terminates after applying $V_{\RUS}$ to $d$ or exhausting the $N$ attempts, and it applies $V_{\RUS}$ with probability bounded below in terms of $N$.
Using the \qhl{} framework, these two correctness conditions can be conveniently formalized as the following two Hoare triples:
\begin{equation}\label{eq:rustriple}
  \models_{\mathit{tot}}
  \bigl\{N\ge1,\ I\otimes\ketbra{\psi_0}\bigr\}\
  \RUS\
  \bigl\{m=0\vee k\ge N,\ \inv(m)\bigr\}
\end{equation}
and
\begin{equation}\label{eq:russuccess}
  \models_{\mathit{tot}}
  \bigl\{N\ge1,\ (1-4^{-N})(I\otimes\ketbra{\psi_0})\bigr\}\
  \RUS\
  \bigl\{m=0,\ \ketbra w\bigr\}.
\end{equation}
Here, $\ket{\psi_0}$ is an arbitrary initial state of $d$, 
\begin{equation}
 \label{eq:rus-invariant}   
\begin{gathered}
 \inv(m)=
 \begin{cases}
  \ketbra w,&m=0,\\
  I\otimes\ketbra{\psi_0},&m=1,
 \end{cases}\qquad\textup{and}\qquad
 \ket w=\ket0\otimes V_{\RUS}\ket{\psi_0}.
\end{gathered}
\end{equation}
We omit a classical precondition or postcondition when it is $\mathit{true}$.

Intuitively, the first Hoare triple~\eqref{eq:rustriple} guarantees that, with $d$ initially in $\ket{\psi_0}$, the program will
terminate and it either applies $V_{\RUS}$ successfully or exhausts the
$N$ attempts and leaves $d$ in $\ket{\psi_0}$.
The second Hoare triple~\eqref{eq:russuccess} guarantees that the overall success probability in implementing the desired $V_{\RUS}$ is at least $1-4^{-N}$.

\end{example}

Through the above example, we can see the classical--quantum interplay at both the program and specification levels.
For the program, the classical variable $m$ controls the $\mathbf{while}$ loop, yet its value in each iteration is set by measuring $a$. For the specification, in Hoare triples~\eqref{eq:rustriple} and~\eqref{eq:russuccess}, the same classical variable $m$ selects the successful branch and parameterizes the quantum predicate $\inv(m)$.

\begin{example}[Quantum Fourier Transform]\label{ssec:qft}
Our second example is the quantum Fourier transform (QFT). The QFT is a key subroutine of Shor's factoring algorithm~\cite{Shor1997} and, through its inverse, the readout stage of quantum phase estimation~\cite{Kitaev1995,Cleve1998}.

The program in Figure~\ref{fig:programs}(b) presents a standard QFT implementation~\cite{Coppersmith1994} on a quantum array $q$ of $n$ qubits.
Here, $q[j]$ is the $j$-th qubit of array $q$, and we define the controlled phase gate $\cR(\gamma)\,\ket{a,b} = e^{\,i\gamma ab}\,\ket{a,b}$ for $a,b\in\{0,1\}$.

The main component of the program is a nested $\mathbf{while}$ loop:
for each $i$, we first apply the Hadamard gate $H$ to qubit $q[i]$. Then, for each $j$, we apply $\cR(2\pi/2^{j+2})$ to qubits $q[i],q[i+1+j]$, where both the qubit indices and the phase $2\pi/2^{j+2}$ are parameterized by classical variables $i$ and $j$.
After these produce the Fourier phases, a final loop of $\SWAP$ gates reverses the qubit order.

The correctness of the QFT program in Figure~\ref{fig:programs}(b) states that, for each fixed $n\ge1$, the program terminates and maps computational basis states to the corresponding Fourier basis states.
This can be formalized as the following Hoare triple:
\begin{equation}\label{eq:qfttriple}
  \models_{\mathit{tot}}
  \bigl\{\ketbra{x}\bigr\}\
  \QFT_n\
  \bigl\{\ketbra{\chi(n,x)}\bigr\}.
\end{equation}
Here, the classical input $x$ is an array of bits representing the integer $\bar x=\sum_{j=0}^{n-1}x[j]\,2^{n-1-j}$ in binary, $\ket{x}=\bigotimes_{i=0}^{n-1}\ket{x[i]}$, and
\[
  \ket{\chi(n,x)}
  =\frac{1}{\sqrt{2^{\,n}}}\sum_{y\in\{0,1\}^n}
    e^{\,2\pi i\,\bar{x}\,\bar{y}/2^n}\,\ket{y}
\]
is the Fourier basis state.

Intuitively, the Hoare triple~\eqref{eq:qfttriple} guarantees that, with $q$ initially in an arbitrary computational basis state $\ket{x}$, the program will terminate and leave $q$ in the corresponding Fourier basis state $\ket{\chi(n,x)}$.
We further verify the QFT on arbitrary inputs in Section~\ref{ssec:ex-qft}.
\end{example}

The above example also illustrates the classical--quantum interplay at the program and specification levels. For the program, the classical variables $i,j$ determine both the qubit indices and the phase of the controlled phase gate. For the specification, the quantum predicates $\ketbra x$ and $\ketbra {\chi(n,x)}$ both depend on the array $x$.

\begin{example}[The Trotter Product-Formula Circuit]\label{ssec:trotter}
Our third example is a Trotter product-formula circuit for Hamiltonian simulation. Such circuits approximate the quantum evolution generated by a sum of Hamiltonians~\cite{Trotter1959,Lloyd1996}. 

The program in Figure~\ref{fig:programs}(c) presents a first-order Trotter implementation of the evolution $U_\theta=e^{-i\theta H}$ governed by the transverse-field Ising Hamiltonian
\[
 H=H_X+H_{ZZ},\qquad
 H_X=\sum_{j=0}^{n-1}a[j]X_j,\qquad
 H_{ZZ}=\sum_{j=0}^{n-2}b[j]Z_jZ_{j+1}.
\]
Here, $n\ge1$ is fixed, while the coefficient arrays $a,b$, evolution time $\theta$, and number $N$ of Trotter steps are classical inputs. The operators $X_j,Z_j$ are the Pauli operators on qubit $q[j]$.
The main component of the program is a $\mathbf{while}$ loop:
each iteration applies the single-qubit gate $R_x(2a[j]\delta)$, with $R_x(\gamma)=e^{-i\gamma X/2}$ and $\delta=\theta/N$, to every qubit $q[j]$. It then applies the two-qubit gate $R_{zz}(2b[j]\delta)$, with $R_{zz}(\gamma)=e^{-i\gamma(Z\otimes Z)/2}$, to every $q[j],q[j+1]$.
The loop runs $N$ iterations.

The correctness of the Trotter program in Figure~\ref{fig:programs}(c) is an approximation guarantee: the program terminates, and its output approximates $U_\theta$ applied to the input up to a target error.
This correctness condition can be formalized as the following Hoare triple:
\begin{equation}\label{eq:trottertriple}
 \models_{\mathit{tot}}
 \left\{N\ge1\wedge\varepsilon\ge\frac{\theta^2}{2N}\|[H_X,H_{ZZ}]\|,\
             \frac{\ketbra\psi}{1+\varepsilon}\right\}\
 \Trotter_n\
 \left\{\frac{\ketbra{U_\theta\psi}+\varepsilon I}{1+\varepsilon}\right\}.
\end{equation}
Here, $\ket\psi$ is an arbitrary input state of $q$, the classical input $\varepsilon\ge0$ is the target error, $[H_X,H_{ZZ}]=H_XH_{ZZ}-H_{ZZ}H_X$, and $\|\cdot\|$ is the operator norm. The classical precondition requires $N\ge1$ and that $\varepsilon$ be at least the first-order Trotter error bound $\frac{\theta^2}{2N}\|[H_X,H_{ZZ}]\|$~\cite{Childs2021TrotterError}.

Intuitively, the Hoare triple~\eqref{eq:trottertriple} guarantees that, for every $\varepsilon$ satisfying the classical precondition, the program will terminate and produce an output state with fidelity at least $1-\varepsilon$ with the target state $U_\theta\ket\psi$ under the desired evolution $U_\theta$.
\end{example}

Through the above example, we can see that the classical--quantum interplay enables us to express a correctness condition involving approximation. Showing exact implementation of the product formula alone does not establish the approximation guarantee of the Trotter circuit. Instead, in the Hoare triple~\eqref{eq:trottertriple}, the classical variable $\varepsilon$ represents the target error, and the quantum predicates are parameterized by $\varepsilon$. 

To sum up, these motivating examples illustrate the intricacy in handling the classical--quantum interplay at the following two levels:

\begin{itemize}[leftmargin=1.4em,itemsep=2pt,topsep=3pt]
\item \textbf{(L1) Program syntax and semantics.} 
    Quantum programs contain both classical and quantum variables. Classical states can affect quantum states by the guards of control flow, indices for arrays, or parameters for gates and measurements.
    Quantum states can affect classical states by the outcomes of measurements.
    A program may also be parameterized, as $\QFT_n$ and $\Trotter_n$ are by $n$.
\item \textbf{(L2) Specification and proof system.} 
    Building on \textbf{(L1)}, the specification of program correctness contains both classical and quantum parts. Quantum predicates can depend on classical variables.
    Consequently, the proof system has to deal with the classical--quantum interplay as well.
\end{itemize}
Existing quantum verifiers such as \cite{Chareton2021Qbricks,Hietala2021SQIR,Li2024Qafny,Liu2019QHLProver,Paykin2017QWIRE,Unruh2019qrhl,wu2025localreasoningprobabilisticbehaviour,Zhou2023} do not directly support the classical--quantum interplay at both levels: they admit the classical and quantum components only in a restricted combination, and none of them directly accommodates all of the specifications \eqref{eq:rustriple}--\eqref{eq:trottertriple}. A detailed comparison is presented in Section~\ref{sec:related}.
\qhl{} combines classical and quantum components as follows: classical variables, expressions, control flow, and classical formulas in specifications keep the same roles as in classical programs, and the quantum components are built around them.
However, mechanizing such a framework as a Lean library requires more than encoding the proof rules.
We further need to handle the classical--quantum interplay at the following level:

\begin{itemize}[leftmargin=1.4em,itemsep=2pt,topsep=3pt]
    \item \textbf{(L0) Mathematics and quantum mechanics.}
    Formalizing the semantics of quantum programs requires supporting structures from mathematics and quantum mechanics, including Hilbert spaces, states, linear operators, tensor products, probabilities, variables, gates, and measurements. While some of these can build on the existing Mathlib in Lean, one also needs to account for new requirements imposed by the classical--quantum interplay at higher levels \textbf{(L1)} and \textbf{(L2)}. Conversely, the design choices at this level determine how conveniently the higher levels can be expressed.
\end{itemize}
A good mechanization must therefore handle the classical--quantum interplay at all three levels together. Section~\ref{sec:faithful} describes how QEDIA does so.

\section{Preliminaries: \texorpdfstring{\qhl{}}{QHL+}}\label{sec:logic}

In this section, we briefly introduce the background on \qhl{}~\cite{Ying2025}.
This includes quantum programs, their correctness specification, and its proof system.

\subsection{Classical and Quantum Variables}\label{ssec:objects}

\paragraph{Classical variables.}
A classical variable has a type, such as
integers or booleans. A \emph{classical state} $\sigma$ assigns to each classical variable a value of the corresponding type. Evaluating a classical expression $e$ gives its value
$\sigma(e)$. We use $\sigma\models\varphi$ to denote that the classical
formula $\varphi$ holds in $\sigma$. 

\paragraph{Simple quantum variables.}
A simple quantum variable has a Hilbert space as its \emph{quantum type}. The state space of a set $S$ of several quantum variables is the tensor product of their Hilbert
spaces, denoted by $ \mathcal H_S=\bigotimes_{q\in S}\mathcal H_q$.
The joint state space of all quantum variables is denoted by $\mathcal H_{\mathit{all}}$.
An operator on a subsystem acts on a larger system by tensoring it with
the identity on the remaining quantum variables.

\paragraph{Quantum arrays and subscripted quantum variables.}
An array quantum variable of type $J\to\mathcal H_q$ consists of quantum
variables indexed by a classical type $J$, each with state space
$\mathcal H_q$. Its joint state space is $\bigotimes_{j\in J}\mathcal H_q$.
If $e$ is a classical expression of type $J$, the \emph{subscripted quantum
variable} $q[e]$ denotes the quantum variable $q[\sigma(e)]$ in classical
state $\sigma$. We write $\sigma(q[e])=q[\sigma(e)]$. For a simple
quantum variable, $\sigma(q)=q$.

For example, the Trotter circuit (Example~\ref{ssec:trotter}) applies gates to $q[j]$ and
$q[j+1]$, where the indices are parameterized by the classical variable $j$. For
a sequence $\bar q=(q_1,\ldots,q_k)$ of quantum variables, an operation acting on it should require
\begin{equation}
    \label{eq:dist}
 \sigma\models\mathit{Dist}(\bar q),
 \quad\textup{defined by}\quad
 \sigma(q_i)\ne\sigma(q_j)\ \text{for every }i\ne j.
\end{equation}
For example, $\CNOT[q[i],q[j]]$ requires $\sigma(i)\ne\sigma(j)$.

\subsection{Syntax and Semantics of \texorpdfstring{\textsf{qWhile}$^+$}{qWhile+}}

\paragraph{Syntax.}
Figure~\ref{fig:prog-syntax} gives the syntax of \textsf{qWhile}$^+$.
Most components are self-explanatory. Here, we emphasize the interplay between classical and quantum parts.
As usual, the guards in $\mathbf{if}$ and $\mathbf{while}$ statements are classical. In $U(\bar t)[\bar q]$, the parameters $\overline{t}$ and indices for subscripted quantum variables in $\bar q$ are also classical.
In the other direction, a measurement $x:=M[\bar q]$ assigns its outcome value to a classical variable. That value can enter subsequent expressions: in the RUS program of Example~\ref{ssec:rus}, the outcome $m$ appears in the loop condition $m=1\wedge k<N$.
We write $\bar q:=\ket0$ for $q_1:=\ket0;\cdots;q_k:=\ket0$.

\begin{figure}[t]
\[
\begin{aligned}
 P ::= {}&\mathbf{skip}\mid x:=e\mid q:=\ket0
          \mid U(\bar t)[\bar q]\mid x:=M[\bar q]\\
 &\mid P_1;P_2
  \mid\mathbf{if}\ b\ \mathbf{then}\ P_1\ \mathbf{else}\ P_0
  \mid\mathbf{while}\ b\ \mathbf{do}\ P.
\end{aligned}
\]
\caption{Syntax of \textsf{qWhile}$^+$. Here, $x$ is a classical variable, $e$ and
$\bar t$ are classical expressions, $b$ is a classical boolean expression,
$q$ is a quantum variable, $\overline{q}$ is a sequence of quantum variables, $U(\overline{t})$ is a unitary gate parameterized by $\overline{t}$, and $M$ is a measurement.}
\label{fig:prog-syntax}
\end{figure}

\paragraph{Operational semantics.}
We define a \emph{classical--quantum state}, or \emph{cq-state}, to be a pair
$(\sigma,\rho)$, where $\sigma$ is a classical state and $\rho$ is a partial density operator, that is, $\rho\ge0$ and
$\Tr(\rho)\le1$. The quantity $\Tr(\rho)$ is the probability carried by
this cq-state. 
This treatment makes it convenient to describe measurements in the operational semantics.

A configuration is $(P,\sigma,\rho)$, where $P$ is a remaining program to be executed or the termination symbol $\downarrow$.
The operational semantics of programs in Figure~\ref{fig:operational-rules} defines rules for the transition $\longrightarrow$ between configurations.

\begin{figure}[t]
\small
\[
\begin{gathered}
 (\mathbf{skip},\sigma,\rho)\longrightarrow(\downarrow,\sigma,\rho),
 \qquad
 (x:=e,\sigma,\rho)\longrightarrow
       (\downarrow,\sigma[x:=\sigma(e)],\rho),\\[3pt]
 (q:=\ket0,\sigma,\rho)\longrightarrow
       (\downarrow,\sigma,\mathcal E_{\sigma(q)}(\rho)),\qquad
 \frac{\sigma\models\mathit{Dist}(\bar q)}
 {(U(\bar t)[\bar q],\sigma,\rho)\longrightarrow
       (\downarrow,\sigma,U_\sigma\rho U_\sigma^\dagger)},\\[4pt]
 \frac{\sigma\models\mathit{Dist}(\bar q)}
 {(x:=M[\bar q],\sigma,\rho)\longrightarrow
       (\downarrow,\sigma[x:=m],M_m^\sigma\rho(M_m^\sigma)^\dagger)}
 \quad\text{for every outcome }m,\\[4pt]
 \frac{(P_1,\sigma,\rho)\longrightarrow(P'_1,\sigma',\rho')}
      {(P_1;P_2,\sigma,\rho)\longrightarrow(P'_1;P_2,\sigma',\rho')},
 \qquad\downarrow;P_2=P_2,\\[4pt]
 \frac{\sigma\models b}
 {(\mathbf{if}\ b\ \mathbf{then}\ P_1\ \mathbf{else}\ P_0,\sigma,\rho)
       \longrightarrow(P_1,\sigma,\rho)},\qquad
 \frac{\sigma\models\neg b}
 {(\mathbf{if}\ b\ \mathbf{then}\ P_1\ \mathbf{else}\ P_0,\sigma,\rho)
       \longrightarrow(P_0,\sigma,\rho)},\\[4pt]
 \frac{\sigma\models b}
 {(\mathbf{while}\ b\ \mathbf{do}\ P,\sigma,\rho)
       \longrightarrow(P;\mathbf{while}\ b\ \mathbf{do}\ P,\sigma,\rho)},\qquad
 \frac{\sigma\models\neg b}
 {(\mathbf{while}\ b\ \mathbf{do}\ P,\sigma,\rho)
       \longrightarrow(\downarrow,\sigma,\rho)}.
\end{gathered}
\]
\caption{Operational transition rules of \textsf{qWhile}$^+$.
Here, the update $(\sigma[x:=v])(y)$ is defined as $v$ if $y=x$, and $\sigma(y)$ if $y\neq x$. The unitary operator $U_\sigma$ represents $U(\sigma(\bar t))$ acting on $\sigma(\bar q)$. Each measurement operator $M_m^\sigma$ corresponds to an outcome $m$ and also acts on $\sigma(\bar q)$. The initialization is defined as $\mathcal E_{\sigma(q)}(\rho)
 =\sum_j(\ket0\!\bra j)_{\sigma(q)}\,
       \rho\,(\ket j\!\bra0)_{\sigma(q)}$.
       The condition $\mathit{Dist}$ is as defined in \eqref{eq:dist}.
}

\label{fig:operational-rules}
\end{figure}

\paragraph{Denotational semantics and termination.}
\phantomsection\label{def:source-denotation}
For an input cq-state $(\sigma,\rho)$, the denotational semantics can be defined directly via the operational semantics:
\[
 \llbracket P\rrbracket(\sigma,\rho)
 =\left\{\left|(\sigma',\rho')\ \middle|\
 (P,\sigma,\rho)\longrightarrow^*(\downarrow,\sigma',\rho')
 \right|\right\},
\]
where $\longrightarrow^*$ is the reflexive transitive closure of $\longrightarrow$ and $\{\!|\cdot|\!\}$ denotes a multiset. 
Here, a multiset is used because there may be transitions from different sequences of measurement outcomes terminating in the same final cq-state.
The total probability of terminating outputs satisfies
\[
 \sum_{(\sigma',\rho')\in\llbracket P\rrbracket(\sigma,\rho)}\Tr(\rho')
 \le\Tr(\rho).
\]
The non-termination probability starting from the cq-state $(\sigma,\rho)$ is then
\[
 \mathrm{NT}(P)(\sigma,\rho)
 =\Tr(\rho)-
   \sum_{(\sigma',\rho')\in\llbracket P\rrbracket(\sigma,\rho)}\Tr(\rho').
\]

\subsection{Formal Quantum States and Assertions}

\paragraph{Formal quantum states.}
In Section~\ref{sec:motivating}, we have seen Hoare triples parameterized by classical variables. For example, $\ketbra {\chi(n,x)}$ depends on the array $x$ for fixed $n$ in Example~\ref{ssec:qft}.
To describe quantum states like $\ket{\chi(n,x)}$, we need the notion of formal (pure) quantum states. Figure~\ref{fig:state-syntax} gives the syntax and signature of a formal quantum state, where the signature records the quantum variables it involves and is recursively defined.
Figure~\ref{fig:state-interpretation} specifies how formal quantum states are interpreted in a classical state.

\begin{figure}[t]
\small
\raggedright
\[
 s ::= \ket e_q\mid s_1\otimes s_2
       \mid\alpha_1s_1+\alpha_2s_2\mid (U(\bar t)[\bar q])s.
\]
\[
\begin{aligned}
 \mathit{Sig}(\ket e_q)&=\{q\},\\
 \mathit{Sig}(s_1\otimes s_2)&=\mathit{Sig}(s_1)\cup\mathit{Sig}(s_2),\\
 \mathit{Sig}(\alpha_1s_1+\alpha_2s_2)&=\mathit{Sig}(s_1)
   &&\text{if }\mathit{Sig}(s_1)=\mathit{Sig}(s_2),\\
 \mathit{Sig}((U(\bar t)[\bar q])s)&=\mathit{Sig}(s)
   &&\text{if }\bar q\subseteq\mathit{Sig}(s).
\end{aligned}
\]
\caption{Syntax and signatures of formal quantum states.
Here, $q$ is a quantum variable of type $\mathcal H_T$, $e$ is a classical expression of type $T$, and
$\alpha_1,\alpha_2$ are complex-valued classical expressions.
The unitary gate $U(\overline{t})[\overline{q}]$ is as defined in Figure~\ref{fig:prog-syntax}.
}
\label{fig:state-syntax}
\end{figure}

\begin{figure}[t]
\small
\raggedright

\[
\begin{aligned}
 \sigma(\ket e_q)&=\ket{\sigma(e)}_{\sigma(q)},\\
 \sigma(s_1\otimes s_2)&=\sigma(s_1)\otimes\sigma(s_2)
  &&\text{if }\sigma(\mathit{Sig}(s_1))\cap\sigma(\mathit{Sig}(s_2))=\emptyset,\\
 \sigma(\alpha_1s_1+\alpha_2s_2)
  &=\sigma(\alpha_1)\sigma(s_1)+\sigma(\alpha_2)\sigma(s_2),\\
 \sigma\bigl((U(\bar t)[\bar q])s\bigr)
  &=U_{\sigma} (\sigma(s))
  &&\text{if }\sigma\models\mathit{Dist}(\bar q).
\end{aligned}
\]
\caption{Interpretation of formal quantum states in a classical state $\sigma$. The unitary operator $U_{\sigma}$ is as defined in Figure~\ref{fig:operational-rules}. The condition $\mathit{Dist}$ is as defined in \eqref{eq:dist}.}
\label{fig:state-interpretation}
\end{figure}

\paragraph{Quantum predicate formulas.}
A \emph{quantum predicate} is a Hermitian operator $E$ satisfying
$0\le E\le I$, where $\le$ is the L\"owner order.
On a partial density operator $\rho$, its expectation satisfies
$0\le\Tr(E\rho)\le\Tr(\rho)$.
Quantum predicate formulas have the syntax in
Figure~\ref{fig:predicate-syntax}. 
Here, $K(\bar t)[\bar q]$ is an atomic formula with classical parameters $\bar t$, interpreted in a classical state $\sigma$ as the quantum predicate $K(\sigma(\bar t))$ on $\sigma(\bar q)$.
We also use the atomic formula $[s]$, where $s$ is a formal quantum state. There are three connectives for quantum predicate formulas:
negation $\neg$, tensor product $\otimes$, and a family $F(\bar t)=\{F_j(\bar t):j\}$ of Kraus operators acting on $\bar q$.

\begin{figure}[t]
\small
\raggedright
\[
 A ::= K(\bar t)[\bar q]\mid [s]\mid\neg A\mid A_1\otimes A_2
       \mid F(\bar t)[\bar q](\{A_j\}).
\]
\[
\begin{aligned}
 \mathit{Sig}(K(\bar t)[\bar q])&=\{\bar q\},
 &\mathit{Sig}([s])&=\mathit{Sig}(s),\\
 \mathit{Sig}(\neg A)&=\mathit{Sig}(A),
 &\mathit{Sig}(A_1\otimes A_2)&=\mathit{Sig}(A_1)\cup\mathit{Sig}(A_2),\\
 \mathit{Sig}(F(\bar t)[\bar q](\{A_j\}))&=X, &\textup{where }&\bar q\subseteq \mathit{Sig}(A_j)=X\textup{ for each $j$}.
\end{aligned}
\]
\caption{Syntax and signatures of quantum predicate formulas. Here, $K$ denotes a family of quantum predicates, and $F$ denotes a finite family of operators. Both are parameterized by $\bar t$ and act on $\bar q$. }
\label{fig:predicate-syntax}
\end{figure}

Figure~\ref{fig:predicate-interpretation} specifies how quantum predicate formulas are interpreted in a classical state.
We write $\sigma\models\mathrm{WD}(A)$ if $\sigma(A)$ is well-defined.
Most of the interpretations are self-explanatory.
The connective $F(\bar t)$ can express the following transformations on quantum predicates:
(1) reset, by taking $F_B=\{\ket{j}\!\bra{0}:j\}$; (2) unitary, by taking $F_U=\{U(\bar v)^\dagger\}$; (3) measurement, by taking $F_M=\{M_j^\dagger:j\}$ for a measurement $M=\{M_j:j\}$; and (4) masking, by taking $F_I(v)=\{I\}$ if $v$ is true and $\{0\}$ otherwise.
We write $b\cdot A$ for $F_I(b)[\bar q](A)$ with $\bar q= \emptyset$. 

\begin{figure}[t]
\small
\raggedright
\[
\begin{aligned}
 \sigma(K(\bar t)[\bar q])&=K(\sigma(\bar t))_{\sigma(\bar q)}
   &&\text{if }\sigma\models\mathit{Dist}(\bar q),\\
 \sigma([s])&=\ketbra{\sigma(s)}
   &&\text{if }\sigma(s)\text{ exists and }\|\sigma(s)\|=1,\\
 \sigma(\neg A)&=I-\sigma(A),\\
 \sigma(A_1\otimes A_2)&=\sigma(A_1)\otimes\sigma(A_2)
   &&\text{if }\sigma(\mathit{Sig}(A_1))\cap\sigma(\mathit{Sig}(A_2))=\emptyset,\\
 \sigma\bigl(F(\bar t)[\bar q](\{A_j\})\bigr)
   &=\sum_j F_j^\sigma\,\sigma(A_j)\,(F_j^\sigma)^\dagger
   &&\text{if }\sigma\models\mathit{Dist}(\bar q).
\end{aligned}
\]

\caption{Interpretation of quantum predicate formulas in a classical state $\sigma$. Here, the condition $\mathit{Dist}$ is as defined in \eqref{eq:dist}. The Kraus operators $F_j^\sigma=F_j(\sigma(\bar t))_{\sigma(\bar q)}$ satisfy $\sum_jF_j^\sigma(F_j^\sigma)^\dagger\le I$.
Each clause also requires its subformulas to be well-defined;
otherwise $\sigma(A)$ is not well-defined.}
\label{fig:predicate-interpretation}
\end{figure}

\paragraph{Classical--quantum assertions.}
A \emph{cq-assertion} is a pair $(\varphi,A)$, where $\varphi$ is a
classical first-order formula and $A$ is a quantum predicate formula.
For a cq-state $(\sigma,\rho)$ with $\sigma\models\varphi\wedge\mathrm{WD}(A)$, the expectation of $(\varphi,A)$ is
$\Tr(\sigma(A)\rho)$. We define the entailment by
\[
 (\varphi,A)\models(\psi,B)
 \iff
 \varphi\models\psi,\
 \varphi\models\mathrm{WD}(A)\leftrightarrow\mathrm{WD}(B),\ \text{and}\
 \sigma(A)\le\sigma(B)\ \text{for all}\ \sigma\models\varphi\wedge\mathrm{WD}(A).
\]

Let us use $A[e/x]$ to denote the substitution of $e$ for $x$ in $A$. For every classical state $\sigma$, we define $\sigma\bigl(A[e/x]\bigr)=\sigma[x:=\sigma(e)](A)$.

\subsection{Proof System}\label{ssec:rules}

The previous ingredients now fit together in a Hoare triple
$\{\varphi,A\}\,P\,\{\psi,B\}$, with pre- and postconditions being cq-assertions. 
The validity of Hoare triples is then defined as follows, providing formal meanings to the specifications \eqref{eq:rustriple}--\eqref{eq:trottertriple} in Section~\ref{sec:motivating}.

\begin{definition}[validity in \qhl{}]\label{def:source-valid}
A Hoare triple $\{\varphi,A\}P\{\psi,B\}$ is valid for total correctness, denoted as $\models_{\mathit{tot}}\{\varphi,A\}P\{\psi,B\}$,
if for every input cq-state $(\sigma,\rho)$ with
$\sigma\models\varphi\wedge\mathrm{WD}(A)$,
\begin{equation}
\label{eq:validity}
 \Tr\bigl(\sigma(A)\rho\bigr)
     \le
 \sum_{\substack{(\sigma',\rho')\in\llbracket P\rrbracket(\sigma,\rho)\\
                 \sigma'\models\psi\wedge\mathrm{WD}(B)}}
       \Tr\bigl(\sigma'(B)\rho'\bigr).    
\end{equation}
The partial correctness $\models_{\mathit{par}}\{\varphi,A\}P\{\psi,B\}$ is defined similarly, except for adding a $\mathrm{NT}(P)(\sigma,\rho)$ term (accounting for non-termination) to the right-hand side of \eqref{eq:validity}.
\end{definition}

\paragraph{Proof rules.}
Table~\ref{tab:rules} shows the proof rules of \qhl{} for deriving Hoare triples. 
Here, $\mathit{free}(\varphi)$
is the set of free classical variables of the classical formula $\varphi$,
and $\mathit{var}(\bar t)$ is the set of classical variables in $\bar t$. We use $\mathit{cv}(A), \mathit{cv}(\bar q)$
to denote the sets of classical variables in the quantum predicate formula $A$ and quantum variables $\bar q$, respectively. For a program $P$, $\mathit{qv}(P)$ is the set of quantum
variables in $P$, and $\mathit{change}(P)$ consists of classical variables that $P$ can modify. 

Most axioms and rules are self-explanatory.
In (Axiom-Meas), each outcome $j$ sets $x$ to $j$ and contributes
$M_j^\dagger A[j/x]M_j$ to the precondition, and $\forall x.\varphi$ requires
$\varphi$ for every outcome.
(Rule-Mask) moves the masking condition $b$ in the quantum predicate $b\cdot B$ into the classical postcondition.
For example, for a fresh variable $y$, taking $b$ to be $x=y$ allows one to select a single measurement outcome. We can derive the measurement axiom of~\cite{Ying2025} by applying (Axiom-Meas) to
$\{\varphi\vee x\neq y,\,(x=y)\cdot A\}$, followed by (Rule-Mask) and (Rule-Conseq),
provided $\varphi[y/x]\models\forall x.\mathrm{WD}(A)$ (see Appendix~\ref{app:meas}).
In (Rule-Accum1), $F$ contains $k$ Kraus operators and $F'$ contains one,
all of the same quantum type. Both (Rule-Accum1) and (Rule-Accum2) require
$\sum_i F_iF_i^\dagger=I$.
The notation $\propto$ is defined such that for any classical state $\sigma$, $\sigma\models F(\bar t)\propto F'(\bar t)$ holds if for every $i=1,\ldots,k$ and density operator
$\rho$ on this quantum type,
\[
 F_i^\dagger(\sigma(\bar t))\,\rho\,F_i(\sigma(\bar t))
 \;\le\;
 F'^{\dagger}(\sigma(\bar t))\,\rho\,F'(\sigma(\bar t)).
\]
(Rule-Accum1) combines mutually exclusive classical postconditions by
disjunction, while (Rule-Accum2) combines quantum preconditions and
postconditions using a Kraus operator.

\begin{table}[!htbp]
\centering
\footnotesize
\renewcommand{\arraystretch}{1.25}
\begin{tabularx}{\textwidth}{@{}r@{\ \ }Y@{}}
\multicolumn{2}{@{}l@{}}{\textbf{(a)} Partial correctness (\qhlpar{})} \\[2pt]
(\textit{Axiom-Ski}) &
  $\{\varphi,A\}\ \mathbf{skip}\ \{\varphi,A\}$ \qquad\qquad\qquad
(\textit{Axiom-Ass}) 
  $\{\varphi[e/x],A[e/x]\}\ x:=e\ \{\varphi,A\}$ \\[6pt]
(\textit{Axiom-Init}) &
  $\{\varphi,F_B[q](A)\}\ q:=\ket0\ \{\varphi,A\}$ \qquad
(\textit{Axiom-Uni})
  $\{\varphi,F_U(\bar t)[\bar q](A)\}\ U(\bar t)[\bar q]\ \{\varphi,A\}$ \\[6pt]
(\textit{Axiom-Meas}) &
  $\bigl\{\forall x.\,\varphi,\ F_M[\bar q](\{A[j/x]\})\bigr\}\
   x := M[\bar q]\ \{\varphi,\,A\}$ \\[6pt]
(\textit{Rule-Seq}) &
  $\dfrac{\{\varphi, A\}\ P_1\ \{\psi, B\} \quad \{\psi, B\}\ P_2\ \{\theta, C\}}
         {\{\varphi, A\}\ P_1; P_2\ \{\theta, C\}}$ \qquad
(\textit{Rule-Cond}) 
  $\dfrac{\{\varphi \wedge b,\, A\}\ P_1\ \{\psi, B\} \quad \{\varphi \wedge \neg b,\, A\}\ P_0\ \{\psi, B\}}
         {\{\varphi, A\}\ \mathbf{if}\ b\ \mathbf{then}\ P_1\ \mathbf{else}\ P_0\ \{\psi, B\}}$ \\[6pt]
(\textit{Rule-Loop-par}) &
  $\dfrac{\{\varphi \wedge b,\, A\}\ P\ \{\varphi, A\}}
         {\{\varphi, A\}\ \mathbf{while}\ b\ \mathbf{do}\ P\ \{\varphi \wedge \neg b,\, A\}}$ \qquad
(\textit{Rule-Mask})
  $\dfrac{\{\varphi, A\}\ P\ \{\psi,\ b\cdot B\}}
         {\{\varphi, A\}\ P\ \{\psi\wedge b,\ B\}}$ \\[6pt]
(\textit{Rule-Conseq}) &
  $\dfrac{(\varphi', A') \models (\varphi, A) \quad \{\varphi, A\}\ P\ \{\psi, B\} \quad (\psi, B) \models (\psi', B')}
         {\{\varphi', A'\}\ P\ \{\psi', B'\}}$ \\[6pt]
(\textit{Rule-Accum1}) &
  $\dfrac{\begin{gathered}
    \{\varphi,A_i\}\ P\ \{\psi_i,B\}\quad\text{for every }i,\qquad
    (\forall i_1,i_2)\bigl(i_1\ne i_2\rightarrow\neg(\psi_{i_1}\wedge\psi_{i_2})\bigr)\\
      \quad F(\bar t)\propto F'(\bar t)\quad
    \bar q\cap\mathit{qv}(P)=\emptyset\quad
    [\mathit{var}(\bar t)\cup\mathit{cv}(\bar q)]\cap\mathit{change}(P)=\emptyset
    \end{gathered}}
    {\{\varphi,F(\bar t)[\bar q](\{A_i\})\}\ P\
     \{\bigvee_i\psi_i,F'(\bar t)[\bar q](B)\}}$ \\[6pt]
(\textit{Rule-Accum2}) &
  $\dfrac{\begin{gathered}
    \{\varphi,A_i\}\ P\ \{\psi,B_i\}\quad\text{for every }i\qquad
    \psi\rightarrow\textstyle\bigwedge_i\mathrm{WD}(B_i)\\
    \bar q\cap\mathit{qv}(P)=\emptyset\quad
    [\mathit{var}(\bar t)\cup\mathit{cv}(\bar q)]\cap\mathit{change}(P)=\emptyset
    \end{gathered}}
    {\{\varphi,F(\bar t)[\bar q](\{A_i\})\}\ P\
     \{\psi,F(\bar t)[\bar q](\{B_i\})\}}$ \\[6pt]
\multicolumn{2}{@{}l@{}}{\textbf{(b)} Total correctness (\qhltot{}): replace (Rule-Loop-par) by} \\
(\textit{Rule-Loop-tot}) &
  $\dfrac{\{\varphi \wedge b,\, A\}\ P\ \{\varphi, A\} \qquad
          \{\varphi \wedge b \wedge t = z,\, I\}\ P\ \{t < z,\, I\} \qquad
          \varphi \rightarrow t \geq 0}
         {\{\varphi, A\}\ \mathbf{while}\ b\ \mathbf{do}\ P\ \{\varphi \wedge \neg b,\, A\}}$ \\[6pt]
\end{tabularx}
\caption{The QHL$^+$ proof rules~\cite{Ying2025}, with a variant of (Axiom-Meas)
and the added (Rule-Mask).
Here, $F_B$, $F_U$, $F_M$, and the mask $b\cdot B$ are the instances of $F(\bar t)$
defined above.
In (Rule-Accum1) and (Rule-Accum2), $i$ indexes the Kraus operators of $F$.
In (Rule-Loop-tot), the bound function $t$ is an integer expression and $z$ is a fresh integer variable, as usual for loops~\cite{AptOlderog2009}.
}
\label{tab:rules}
\end{table}

\section{Formalization in Lean and Library Organization}\label{sec:faithful}\label{sec:faith}\label{sec:arch}

We now describe how QEDIA handles the classical--quantum interplay
consistently at the three levels of Section~\ref{sec:motivating}. The
library follows these levels, each building on those below it.

This consistency comes from the way \qhl{} combines classical and quantum
components. Classical components keep their roles from classical programs,
and quantum components are built around them. The formalization follows
this combination at every level. The classical components are expressed
directly in Lean, and the library builds the quantum components on them.
At \textbf{(L0)}, quantum variables are names from a finite set, and each
set of names determines the Hilbert space of those variables. Classical
data, such as array indices, can therefore select the quantum variables,
and hence the space, on which an operation acts. At \textbf{(L1)},
classical expressions and guards are Lean functions on classical states.
Classical expressions also supply the indices of subscripted quantum
variables and the parameters of gates. Proving the condition
$\mathit{Dist}(\bar q)$ of~\eqref{eq:dist} thus reduces to classical
reasoning about indices. At \textbf{(L2)}, the classical formula of a
cq-assertion is a Lean predicate on classical states, and its quantum
predicate is parameterized by classical variables. Classical reasoning
about specifications therefore takes place directly in Lean's logic.

Table~\ref{tab:lean-map} lists the Lean counterparts of the notions of
Section~\ref{sec:logic}, with their types.
Sections~\ref{ssec:lean-programs} and~\ref{ssec:lean-semantics} formalize
\textbf{(L0)} and \textbf{(L1)}: state spaces, programs, and their
execution paths. Sections~\ref{ssec:lean-assertions}
and~\ref{ssec:lean-rules} formalize \textbf{(L2)}: assertions and
validity, and the proof systems with their soundness.
Section~\ref{sec:qpv-metrics} reports the size of the library and the
examples.

\begin{table}[t]
\caption{Lean counterparts of the notions of Section~\ref{sec:logic}.
Types are simplified and parameters are omitted; for a Lean type, the third column gives the type it stands for. $\mathcal L(\mathcal H)$ denotes the linear operators on $\mathcal H$, and values of \texttt{Assn} are quantum predicates.}
\label{tab:lean-map}
\footnotesize
\begin{tabular}{@{}l l >{\raggedright\arraybackslash}p{0.39\linewidth}@{}}
\toprule
Section~\ref{sec:logic} & Lean & Type \\
\midrule
classical-state type $\Sigma$ & \texttt{$\sigma$T} & \texttt{Type} \\
space $\mathcal H_S$ of $S\subseteq L$ & \texttt{HS D S} & $\mathbb C^{\prod_{q\in S}D_q}$ \\
expression $e$ of type $T$, guard $b$ & Lean functions & $\Sigma\to T$, $\Sigma\to\mathsf{Bool}$ \\
sequence $\bar q=(q_1,\ldots,q_k)$ & \texttt{RegExpr} & $\Sigma\to\{0,\ldots,k-1\}\to L$ \\
program $P$ & \texttt{Prog} & inductive type for the syntax of Figure~\ref{fig:prog-syntax} \\
quantum predicate formula $A$ & \texttt{Assn} & $\Sigma\to\mathcal L(\mathcal H_{\mathit{all}})$ \\
cq-assertion $(\varphi,A)$ & \texttt{SourceAssertion} & $(\Sigma\to\texttt{Prop})\times\texttt{Assn}\times(\Sigma\to\texttt{Prop})$ \\
masking by $\varphi\wedge\mathrm{WD}(A)$ & \texttt{toAssn} & $\texttt{SourceAssertion}\to\texttt{Assn}$ \\
$\models_{\mathit{par}},\ \models_{\mathit{tot}}$ & \texttt{validPar}, \texttt{validTot} & $\texttt{Assn}\to\texttt{Prog}\to\texttt{Assn}\to\texttt{Prop}$ \\
$\vdash_{\mathit{par}},\ \vdash_{\mathit{tot}}$ & \texttt{SourceDerivPar}, \texttt{SourceDerivTot} & $\texttt{SourceAssertion}\to\texttt{Prog}\to\texttt{SourceAssertion}\to\texttt{Prop}$ \\
\bottomrule
\end{tabular}
\end{table}

\subsection{State Spaces and Programs}\label{ssec:lean-programs}

At \textbf{(L0)}, we use, on top of Mathlib, the labeled Hilbert spaces
and Dirac notation of~\cite{Zhou2023}. Each quantum variable $q$ in a
finite set $L$ of variable names has a finite, nonempty set $D_q$ of basis
labels. A set $S\subseteq L$ has the Hilbert space
\[
 \mathcal H_S=\mathbb C^{\prod_{q\in S}D_q}\cong\bigotimes_{q\in S}\mathbb C^{D_q},
\]
with $\mathcal H_{\mathit{all}}=\mathcal H_L$. An operator $O_S$ on
$\mathcal H_S$ acts on $\mathcal H_{\mathit{all}}$ through its
\emph{cylindrical extension} $O_S\otimes I$, where $I$ is the identity on
the variables in $L\setminus S$. Thus, whichever quantum variables the
classical state selects, every quantum operation becomes an operator on
$\mathcal H_{\mathit{all}}$, on which quantum states and quantum
predicates are also defined.

At \textbf{(L1)}, programs are deeply embedded~\cite{Boulton1992Embedding},
since the operational semantics and the proof rules are defined by cases
on program syntax. Programs form the inductive type \texttt{Prog} for the
syntax of Figure~\ref{fig:prog-syntax}. For simplicity, classical
expressions are shallowly embedded. Lean represents a classical
expression $e$ by its evaluation $\sigma\mapsto\sigma(e)$, a function on
the type $\Sigma$ of classical states. A guard $b$ thus has type
$\Sigma\to\mathsf{Bool}$, and proofs about classical computations use
Lean's arithmetic.

Classical expressions also select quantum variables. A sequence
$\bar q=(q_1,\ldots,q_k)$ may contain subscripted quantum variables such
as $q[i]$, so the quantum variables $\sigma(\bar q)$ it refers to depend
on the classical state. We represent $\bar q$ by a function
$r:\Sigma\to\{0,\ldots,k-1\}\to L$ (\texttt{RegExpr} in
Table~\ref{tab:lean-map}) with $r(\sigma)=\sigma(\bar q)$. Then
$\sigma\models\mathit{Dist}(\bar q)$ holds if and only if $r(\sigma)$ is
injective. Lean represents a quantum array by an injective map from its
indices into $L$. The condition $\mathit{Dist}(\bar q)$ therefore reduces
to conditions on index values, which Lean proves with its arithmetic.

The \texttt{Prog} constructors for assignments, initializations, gates,
and measurements take a classical update, a quantum operation, or both.
The constructor for $x:=e$ takes the update
$\sigma\mapsto\sigma[x:=\sigma(e)]$. The constructor for initialization
takes a finite set of quantum variables, all reset to $\ket0$ at once, as
in $\bar q:=\ket0$. The constructor for $U(\bar t)[\bar q]$
takes $r$ and, in each classical state $\sigma$, the operator
$U(\sigma(\bar t))$ on
$\mathcal H_{r(\sigma)(0)}\otimes\cdots\otimes\mathcal H_{r(\sigma)(k-1)}$.
This space is identified with $\mathcal H_{\sigma(\bar q)}$ when
$\mathit{Dist}(\bar q)$ holds. The constructor for $x:=M[\bar q]$ takes
both: $r$ and the measurement operators $M_j^\sigma$ in each $\sigma$, and
an update $u_j:\Sigma\to\Sigma$ for each outcome $j$, such as
$\sigma\mapsto\sigma[x:=j]$. Since these operators depend on $\sigma$,
Lean checks, when a program is written, that for every classical state
each gate is unitary and each measurement satisfies the completeness
relation.

\subsection{Execution Paths}\label{ssec:lean-semantics}

Lean represents the denotational semantics
$\llbracket P\rrbracket(\sigma,\rho)$ of Section~\ref{sec:logic} by
execution paths, on which the soundness proofs of
Section~\ref{ssec:lean-rules} rely. An
\emph{execution path} of a configuration $c=(P,\sigma,\rho)$ is a finite
sequence of transitions of Figure~\ref{fig:operational-rules} starting
from $c$. Let $\mathcal T_n(c)$ be the finite set of execution paths of
length $n$ that end in a terminal configuration. We write
$(\downarrow,\sigma_h,\rho_h)$ for the terminal configuration of a path
$h$. Each such path contributes $(\sigma_h,\rho_h)$ to
$\llbracket P\rrbracket(\sigma,\rho)$. Sums over this multiset, such as
the one in~\eqref{eq:validity}, thus become sums over paths: for every nonnegative function $f$ on cq-states,
\begin{equation}\label{eq:lean-paths}
 \sum_{(\sigma',\rho')\in\llbracket P\rrbracket(\sigma,\rho)}f(\sigma',\rho')
 =\sum_{n=0}^{\infty}\sum_{h\in\mathcal T_n(c)}f(\sigma_h,\rho_h).
\end{equation}
For $f(\sigma',\rho')=\Tr(\rho')$, the partial sums of the right-hand side
of~\eqref{eq:lean-paths} are bounded by $\Tr(\rho)$. The series therefore
converges. Partial correctness uses the resulting non-termination
probability
$\mathrm{NT}(P)(\sigma,\rho)=\Tr(\rho)-\sum_n\sum_{h\in\mathcal T_n(c)}\Tr(\rho_h)$.

\subsection{Assertions and Validity}\label{ssec:lean-assertions}

At \textbf{(L2)}, Lean keeps the classical and quantum parts of a
cq-assertion apart. Like classical expressions, quantum predicate formulas
are shallowly embedded. Lean represents a formula $A$ by its
interpretation $\sigma\mapsto\sigma(A)$
(Figure~\ref{fig:predicate-interpretation}), a function of type
\texttt{Assn}. As a total function, the interpretation does not record
where $A$ is well-defined, so the well-definedness condition
$\mathrm{WD}(A)$ is kept as a separate component. A cq-assertion
$(\varphi,A)$ is thus represented by a \texttt{SourceAssertion} with three
components: $\varphi$ and $\mathrm{WD}(A)$ as predicates on $\Sigma$, and
the interpretation of $A$. The entailment $(\varphi,A)\models(\psi,B)$ of
Section~\ref{sec:logic} uses these components separately. Its first two
conditions, $\varphi\models\psi$ and
$\varphi\models\mathrm{WD}(A)\leftrightarrow\mathrm{WD}(B)$, become
implications between Lean predicates. The third, $\sigma(A)\le\sigma(B)$,
compares the interpretations in the states satisfying
$\varphi\wedge\mathrm{WD}(A)$.

Definition~\ref{def:source-valid} uses a cq-assertion $(\varphi,A)$ only
through the expectation $\Tr(\sigma(A)\rho)$ for classical states
$\sigma\models\varphi\wedge\mathrm{WD}(A)$. The function \texttt{toAssn}
therefore represents $(\varphi,A)$ by a single quantum predicate, which is
$\sigma(A)$ for these states and $0$ for the others. This is the masking
$b\cdot A$ of Section~\ref{sec:logic}, with the formula
$\varphi\wedge\mathrm{WD}(A)$ in place of $b$.

With \texttt{toAssn}, validity is stated for single quantum predicates
$\mathcal A$ and $\mathcal B$ of type \texttt{Assn}. The proposition
$\mathtt{validTot}\ \mathcal A\ P\ \mathcal B$ states that, for every
classical state $\sigma$ and positive operator $\rho$,
\begin{equation}\label{eq:lean-valid}
 \Tr\bigl(\mathcal A(\sigma)\rho\bigr)\le
 \sum_{n=0}^{\infty}\sum_{h\in\mathcal T_n(P,\sigma,\rho)}
   \Tr\bigl(\mathcal B(\sigma_h)\rho_h\bigr).
\end{equation}
The proposition \texttt{validPar} adds $\mathrm{NT}(P)(\sigma,\rho)$ to the
right-hand side. For $\mathcal A=\texttt{toAssn}(\varphi,A)$ and
$\mathcal B=\texttt{toAssn}(\psi,B)$, \eqref{eq:lean-valid} is equivalent
to the validity of Definition~\ref{def:source-valid}. By~\eqref{eq:lean-paths},
its right-hand side is a sum over $\llbracket P\rrbracket(\sigma,\rho)$,
as in~\eqref{eq:validity}. In this sum, masking makes every term with
$\sigma_h\not\models\psi\wedge\mathrm{WD}(B)$ zero. This matches the
condition on the sum in~\eqref{eq:validity}. On the left-hand side,
masking gives zero when $\sigma\not\models\varphi\wedge\mathrm{WD}(A)$, so
such states impose no condition. Finally, both sides
of~\eqref{eq:lean-valid} are linear in $\rho$. Quantifying over positive
operators is therefore equivalent to quantifying over the partial density
operators of Definition~\ref{def:source-valid}.

\subsection{Proof Systems and Soundness}\label{ssec:lean-rules}

The inductive propositions \texttt{SourceDerivPar} and
\texttt{SourceDerivTot} have one constructor for each axiom and rule of
Table~\ref{tab:rules}, and the constructor for (Rule-Conseq) uses the
entailment of Section~\ref{ssec:lean-assertions}.

For the axioms, Lean computes the precondition from the postcondition
$(\varphi,A)$. As for the constructors of Section~\ref{ssec:lean-programs},
classical updates and quantum operations are treated separately. A
classical update is handled by evaluation. For $x:=e$,
the precondition evaluates $\varphi$, $\mathrm{WD}(A)$, and the
interpretation of $A$ in $\sigma[x:=\sigma(e)]$. This is the substitution
$[e/x]$ of (Axiom-Ass), as defined in Section~\ref{sec:logic}. A quantum
operation is handled by its transformation. For initialization and gates,
the precondition applies $F_B$ or $F_U$ in each classical state. A
measurement combines both. In a classical state
$\sigma\models\mathit{Dist}(\bar q)$, the quantum predicate of its
precondition is interpreted as
\[
 \sum_j\bigl(M_j^\sigma\bigr)^\dagger\,
   \bigl(u_j(\sigma)\bigr)(A)\,M_j^\sigma.
\]
This is $F_M$ of (Axiom-Meas), with $\sigma[x:=j]$ generalized to
$u_j(\sigma)$. As in (Axiom-Meas), $\varphi$ and $\mathrm{WD}(A)$ must
hold after every update $u_j$. Since $A$ is interpreted after the update,
it can depend on the outcome, as the invariant $\inv(m)$ of
Example~\ref{ssec:rus} does.

Since classical expressions and quantum predicate formulas are shallowly
embedded, side conditions about their variables are stated semantically.
In
(Rule-Loop-tot), the fresh variable $z$ becomes an integer quantified
outside the program, so the loop body cannot modify it. In (Rule-Accum1)
and (Rule-Accum2), the conditions on $\mathit{change}(P)$ and
$\mathit{qv}(P)$ become the following: every transition of $P$ preserves
the values of $\bar t$ and $\sigma(\bar q)$, and $P$ acts on no quantum
variable in $\sigma(\bar q)$. These systems satisfy the following
soundness theorem.
\begin{theorem}[soundness]\label{thm:sound}
For all cq-assertions $(\varphi,A)$, $(\psi,B)$ and programs $P$,
\[
\begin{aligned}
 \mathtt{SourceDerivPar}\ (\varphi,A)\ P\ (\psi,B)
   &\Longrightarrow\mathtt{validPar}\ (\texttt{toAssn}(\varphi,A))\ P\ (\texttt{toAssn}(\psi,B)),\\
 \mathtt{SourceDerivTot}\ (\varphi,A)\ P\ (\psi,B)
   &\Longrightarrow\mathtt{validTot}\ (\texttt{toAssn}(\varphi,A))\ P\ (\texttt{toAssn}(\psi,B)).
\end{aligned}
\]
\end{theorem}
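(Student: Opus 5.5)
The proof goes by induction on the derivation, i.e.\ on the inductive propositions \texttt{SourceDerivPar} and \texttt{SourceDerivTot}. For each constructor we show that the validity of \eqref{eq:lean-valid} (plus $\mathrm{NT}$ for the partial case) transfers from the premises to the conclusion. Before the induction we set up three semantic lemmas about execution paths.

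\textbf{Lemma 1 (path decomposition).} Terminating paths of $P_1;P_2$ from $c$ correspond bijectively to a terminating path of $P_1$ followed by a terminating path of $P_2$ from the intermediate cq-state. Via \eqref{eq:lean-paths}, the double series therefore reorders as an iterated sum of nonnegative terms.

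\textbf{Lemma 2 (one-step unfolding).} This handles the first transition for $\mathbf{if}$ and $\mathbf{while}$, and for measurement splits the sum over outcomes $j$.

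\textbf{Lemma 3 (linearity).} Both sides of \eqref{eq:lean-valid} are additive and monotone in $\rho$. In particular, for a measurement the successor states are $M_j\rho M_j^\dagger$, and trace cyclicity gives $\sum_j\Tr(\mathcal B(u_j\sigma)M_j\rho M_j^\dagger)=\Tr\bigl(\sum_j M_j^\dagger\mathcal B(u_j\sigma)M_j\,\rho\bigr)$.

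With these lemmas the axioms are direct computations. Each atomic command has exactly one transition per outcome. Masking by \texttt{toAssn} makes the left side vanish unless $\sigma\models\varphi\wedge\mathrm{WD}(A)$, and in that case the hypotheses $\forall x.\varphi$ and $\mathrm{WD}$ after every $u_j$ make the postcondition unmasked on each successor.

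The structural rules are handled as follows.
\begin{itemize}
\item \textbf{(Rule-Seq):} apply the hypothesis for $P_2$ inside the outer sum of Lemma 1. For partial correctness we also check that $\mathrm{NT}(P_1;P_2)=\mathrm{NT}(P_1)+\sum\mathrm{NT}(P_2)$ over intermediate states.
\item \textbf{(Rule-Cond):} one step, with the branch chosen by $\sigma(b)$.
\item \textbf{(Rule-Conseq):} pointwise Löwner monotonicity of $\Tr(\cdot\,\rho)$ for $\rho\ge0$, using the $\mathrm{WD}$ biconditional so that the masks match.
\item \textbf{(Rule-Mask):} immediate, because $\texttt{toAssn}(\psi,b\cdot B)=\texttt{toAssn}(\psi\wedge b,B)$ as functions.
\end{itemize}

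For (Rule-Accum1) and (Rule-Accum2) we first prove a \emph{frame lemma}. If every transition of $P$ preserves $\bar t$ and $\sigma(\bar q)$ and acts on no variable of $\sigma(\bar q)$, then each path's final operator commutes with conjugation by $F_i^\sigma$: running $P$ on $F_i^\dagger\rho F_i$ yields $F_i^\dagger\rho_hF_i$ along the same path. This is proved by induction on path length, using the fact that cylindrical extensions on disjoint variable sets commute. With the frame lemma, Accum2 follows by summing the premise instances over $i$ with cyclicity. Accum1 additionally needs the disjointness of the $\psi_i$, so that the masks of $\bigvee_i\psi_i$ sum without double counting, together with the $\propto$ inequality and $\sum_iF_iF_i^\dagger=I$.

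The loop rules are the main obstacle. For (Rule-Loop-par), we prove by induction on $n$ a bounded statement: for all $\sigma,\rho$,
\[
\Tr(\mathcal A(\sigma)\rho)\le\sum_{m\le n}\sum_{h\in\mathcal T_m}\Tr(\mathcal B(\sigma_h)\rho_h)+\bigl(\Tr\rho-\textstyle\sum_{m\le n}\sum_{h\in\mathcal T_m}\Tr\rho_h\bigr),
\]
unfolding one iteration with Lemmas 1 and 2 and the body hypothesis. We then pass to the limit $n\to\infty$, using convergence of the bounded monotone series; this yields exactly the $\mathrm{NT}$ term. Care is needed because the body's $\mathrm{NT}$ must be absorbed into the loop's $\mathrm{NT}$. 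To handle this I would state the loop lemma in terms of the truncated quantity $\Tr\rho-\text{(terminated mass by step }n)$, which is monotone in $n$.

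For (Rule-Loop-tot), we first use the second premise to show by well-founded induction on $z=\sigma(t)\ge0$ that the loop terminates almost surely, i.e.\ $\mathrm{NT}=0$ from states satisfying $\varphi$. The premise $\{\varphi\wedge b\wedge t=z,I\}P\{t<z,I\}$ says that all of the mass ends in states with a smaller bound. The partial-correctness argument then applies with $\mathrm{NT}=0$. One subtlety to verify is that $\varphi$ holds on the mass re-entering the loop; this follows from the first premise only after masking, so the termination induction is carried out on $\texttt{toAssn}(\varphi\wedge t=z,I)$, conjoined with $\varphi$ via Accum-style linearity.
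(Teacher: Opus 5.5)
Your treatment of the axioms, (Rule-Seq), (Rule-Cond), (Rule-Conseq), (Rule-Mask), the frame-lemma argument for the accumulation rules, and the truncated induction for (Rule-Loop-par) follows the paper's outline and is sound. One small omission: in the partial case of (Rule-Accum2) you also need $\sum_iF_iF_i^\dagger=I$ to recombine the $\mathrm{NT}$ terms, not only in (Rule-Accum1).

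\textbf{The gap is in (Rule-Loop-tot).} You build it on the claim that $\mathrm{NT}=0$ from every state satisfying $\varphi$. That claim is false, and your closing remark does not repair it. The first premise constrains $\varphi$ only on output mass weighted by $A$. Because of (Rule-Mask), a derivable classical postcondition may fail on part of the output mass; in the RUS success outline, the branch $m=1$, $k+1=N$ leaves $k\le N\wedge(m=0\vee k<N)$. For a concrete counterexample, take $\varphi\equiv(x=5)$, $b\equiv\mathit{true}$, $t\equiv x$, body $x:=x-1$, and $A=0$.
\begin{itemize}
\item The first premise is derivable: apply (Axiom-Ass) to $\{\mathit{true},(x=5)\cdot0\}$, then (Rule-Conseq), (Rule-Mask), (Rule-Conseq).
\item The second premise follows from (Axiom-Ass) and (Rule-Conseq), and $\varphi\rightarrow t\ge0$ holds.
\item The conclusion is valid, but the loop never terminates from $x=5$, so $\mathrm{NT}=\Tr\rho$.
\end{itemize}
Running your termination induction on $\texttt{toAssn}(\varphi\wedge t=z,I)$ fails on the same example. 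After one iteration all mass has left $\varphi$, so the second premise no longer applies.

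\textbf{The fix} is what the paper's ``induction on the initial value of the bound function'' does: drop the termination lemma and induct on the total-correctness bound itself. Show by strong induction on $n\in\mathbb N$ that for all $\sigma\models\varphi$ with $\sigma(t)\le n$ and all $\rho\ge0$,
\[
\Tr\bigl(\texttt{toAssn}(\varphi,A)(\sigma)\rho\bigr)\le\sum_{m}\sum_{h\in\mathcal T_m(\mathbf{while}\ b\ \mathbf{do}\ P,\sigma,\rho)}\Tr\bigl(\texttt{toAssn}(\varphi\wedge\neg b,A)(\sigma_h)\rho_h\bigr).
\]
For states outside $\varphi$ the left side is $0$, so nothing is needed there. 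Suppose $\sigma\models b$ and $\sigma(t)=z$.
\begin{itemize}
\item The body premise bounds the left side by $\sum_{h_1}\Tr(\texttt{toAssn}(\varphi,A)(\sigma_{h_1})\rho_{h_1})$ over terminating body paths $h_1$.
\item Terms with $\sigma_{h_1}\not\models\varphi\wedge\mathrm{WD}(A)$ vanish. Mass that leaves the invariant is thus discarded rather than required to terminate.
\item The second premise at $z$, together with $\sum_{h_1}\Tr\rho_{h_1}\le\Tr\rho$, forces $\rho_{h_1}=0$ unless $\sigma_{h_1}(t)<z$.
\item Then $\varphi\rightarrow t\ge0$ places every surviving term within the induction hypothesis, and your Lemma 1 reassembles the loop paths.
\end{itemize}
The combination you gesture at, that every body path with nonzero mass lands in $t<z$, is the right ingredient. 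It must be used inside this $A$-weighted induction, not to prove $\mathrm{NT}=0$, and no statement about $\mathrm{NT}$ is needed.
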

The proof is by induction on the derivation, with one semantic lemma per
constructor. For the axioms, the lemmas follow directly from the
preconditions above. For (Rule-Mask), the lemma is immediate, since
masking twice combines the conditions:
$\texttt{toAssn}(\psi,b\cdot B)=\texttt{toAssn}(\psi\wedge b,B)$. For
sequential composition, each execution path of $P_1;P_2$ splits into an
execution path of $P_1$ followed by one of $P_2$. The lemma thus combines
the validity of the two premises through the intermediate assertion. For
loops, induction on the length of execution paths gives the
partial-correctness bound. Induction on the initial value of the bound
function $t$ gives the total-correctness bound. The soundness theorem uses
only Lean's standard axioms \texttt{propext}, \texttt{Classical.choice},
and \texttt{Quot.sound}.

\subsection{Code Metrics}\label{sec:qpv-metrics}

Tables~\ref{tab:metrics} and~\ref{tab:metrics-ex} report the size of the
library and example files, including integration tests.
Table~\ref{tab:metrics} groups the library components by level.
Excluding blank lines, the Spec column counts lines outside proofs and
comments, and the Proof column counts lines strictly inside proof bodies.

At the declaration level, the library contains 2{,}475
\texttt{theorem}/\texttt{lemma} declarations and 673 definitions. The
examples contain 1{,}117 such declarations and 575 definitions.
\begingroup
\small
\setlength{\LTcapwidth}{\textwidth}
\begin{longtable}{@{}l r r@{}}
\caption{Code metrics of the library by level and component: upper
bounds on line counts.}\label{tab:metrics} \\
\toprule
Component & Spec & Proof \\
\midrule
\endfirsthead
\multicolumn{3}{c}{\small\itshape Table~\thetable{} (continued)} \\
\toprule
Component & Spec & Proof \\
\midrule
\endhead
\bottomrule
\endfoot
\bottomrule
\endlastfoot
\multicolumn{3}{@{}l}{\textbf{(L0)} Mathematics and quantum mechanics} \\
\quad Foundations & $<$250 & $<$200 \\
\quad Hilbert spaces and operators & $<$1{,}100 & $<$1{,}100 \\
\quad Multi-qubit states and operators & $<$2{,}500 & $<$3{,}800 \\
\quad Gates & $<$900 & $<$650 \\
\quad Dirac notation & $<$950 & $<$600 \\
\quad Superoperators & $<$450 & $<$400 \\
\multicolumn{3}{@{}l}{\textbf{(L1)} Program syntax and semantics} \\
\quad Programs & $<$500 & $<$250 \\
\quad Semantics & $<$700 & $<$750 \\
\multicolumn{3}{@{}l}{\textbf{(L2)} Specification and proof system} \\
\quad Logic & $<$5{,}600 & $<$5{,}000 \\
\midrule
Tactics & $<$100 & $<$50 \\
\midrule
Total & $<$13{,}000 & $<$13{,}000 \\
\end{longtable}
\endgroup
\begin{table}[!htbp]
\centering
\small
\caption{Code metrics of the program examples: upper bounds on line counts.}
\label{tab:metrics-ex}
\begin{tabular}{@{}l r r@{}}
\toprule
Example & Spec & Proof \\
\midrule
Bell-state preparation & $<$100 & $<$50 \\
Parallel Hadamard & $<$150 & $<$300 \\
Quantum Fourier transform & $<$750 & $<$1{,}150 \\
Inverse QFT & $<$350 & $<$600 \\
Quantum phase estimation & $<$500 & $<$800 \\
Cluster-state preparation & $<$100 & $<$100 \\
Repeat-until-success & $<$150 & $<$100 \\
Trotter product formula (TFIM) & $<$550 & $<$550 \\
Bravyi--Kitaev transformation & $<$350 & $<$400 \\
Fermi--Hubbard LCU block encoding & $<$1{,}150 & $<$1{,}700 \\
Fermi--Hubbard QSP & $<$1{,}450 & $<$1{,}900 \\
Other examples and integration tests & $<$1{,}100 & $<$950 \\
\midrule
Total & $<$6{,}500 & $<$8{,}500 \\
\bottomrule
\end{tabular}
\end{table}

\section{The Prover Agent: Building the Library and Verifying Programs}\label{sec:qpv}
We construct the library of Section~\ref{sec:faithful} and the program
proofs of Section~\ref{sec:suite} with a prover agent. It consists of a
general-purpose core agent (Section~\ref{sec:qpv-design}) and a quantum
extension (Section~\ref{sec:qpv-quantum}). The agent first builds the
library, formalizing \qhl{} and the mathematics it relies on, and then
uses the library to verify programs. Throughout,
the workflow separates fixing what is proved from proving it. A leader
fixes the statements, subagents prove them, and the leader accepts a proof
only if it proves the fixed statements. Figure~\ref{fig:qpv} summarizes
the workflow.

\subsection{Core Agent}
\label{sec:qpv-design}\label{sec:qpv-orch}
The core agent follows the approach of recent LLM-based theorem-proving
agents~\cite{li2026merleanproverrecursiveloopingharness,pmlr-v306-liu26gp,breen2026axproverdeepreasoningagentic}.
Given a mathematical statement or source, it produces Lean declarations
in three stages: task definition, proof construction, and acceptance. The
leader coordinates proof-writing subagents, and both are LLM agents. Human
reviewers supervise the process and review the statements, the library,
and the proofs. All agents use Mathlib and the library. They
follow shared
guidelines: preserve the target statement and its assumptions, search for
existing results before proving new ones, and place reusable lemmas in
the library.

\paragraph{Task definition.}
The leader reads the source and the library and writes a task
specification: the target statements, their permitted assumptions, the
required declarations, and the permitted changes. It reviews whether the
specification expresses the intended mathematics and gives a meaningful
guarantee. It then assigns the task to a subagent once its prerequisites
are completed.

\paragraph{Proof construction.}
A subagent constructs each proof with Lean tools. These tools report proof
goals and diagnostics, build files, and search Mathlib and the library.
The subagent first tries existing lemmas and automation, and then edits
and repairs the proof with Lean's feedback. It also checks axioms and
inspects its own source locally, and reviews the proof against the
guidelines, so failures can be repaired before submission.

\paragraph{Acceptance.}
The leader runs five checks on the combined changes of all subagents,
returns failed changes for repair, and repeats the affected checks after
any later change.
Checks~\ref{check:faithful} and~\ref{check:scope} relate the submitted
declarations to the task specification. Checks 2--4 ensure that the proofs
are complete and checked by Lean's kernel with only the standard axioms.
\begin{enumerate}[leftmargin=1.6em, itemsep=3pt, topsep=3pt]
\item\label{check:faithful} \textbf{Faithful statements (Lean, LLM review).}
  The leader reviews whether the required declarations, and the
  definitions they depend on, express the mathematics fixed in the task
  specification. When a task verifies a program, Lean also checks that its
  validity theorem has the type of the target Hoare triple fixed there.
\item \textbf{Clean build (Lean, Python).}
  The library, including all submitted modules, builds without errors
  under the required settings.
\item\label{check:axioms} \textbf{Axiom dependencies (Lean, Python).}
  For each required theorem, Lean reports the axioms it uses. A script
  checks that they are exactly \texttt{propext},
  \texttt{Classical.choice}, and \texttt{Quot.sound}.
\item \textbf{Kernel-checked proofs (Lean, LLM review).}
  The leader inspects the submitted source for \texttt{sorry},
  \texttt{admit}, \texttt{native\_decide}, and options that bypass kernel
  checking. Lean checks the proofs under the required settings.
\item\label{check:scope} \textbf{Protected statements and definitions (LLM review).}
  The leader compares the submitted changes with the approved baseline.
  It reviews whether they stay within the permitted changes and preserve
  the target statements and protected definitions. When a task extends
  the library, the leader also reviews
  new or changed definitions against their mathematical sources.
\end{enumerate}

\subsection{Quantum Extension}
\label{sec:qpv-quantum}
The quantum extension supports applying \qhl{} rules and adds
quantum-specific checks. In program verification, a task consists of a
quantum program and a target Hoare triple. Its required declarations are
the program as a \texttt{Prog} term, the precondition and postcondition, a
\texttt{SourceDerivTot} or \texttt{SourceDerivPar} derivation, and the
validity theorem. The agents construct the derivation, and the leader
obtains the validity theorem from it by Theorem~\ref{thm:sound}. Program proofs thus reason about assertions
rather than execution paths (Section~\ref{ssec:lean-semantics}).

\paragraph{Verification conditions.}
Applying a rule turns its premises into Lean proof goals. For the axioms,
Lean computes the precondition from the postcondition
(Section~\ref{ssec:lean-rules}). For the other rules, the agents choose
what the conclusion leaves open, such as intermediate assertions, loop
invariants, and bound functions. The resulting verification conditions
split into classical and quantum conditions, as in
Section~\ref{sec:faithful}. Classical conditions include the classical
parts of entailments, $\mathit{Dist}(\bar q)$, and the nonnegativity of a
bound function. They are goals about Lean predicates on classical states,
proved with Lean's logic and arithmetic. Quantum conditions include
$\sigma(A)\le\sigma(B)$ in the L\"owner order and the conditions
$F\propto F'$ and $\sum_iF_iF_i^\dagger=I$ of the accumulation rules. They
are proved with library results about operators and quantum variables.

\paragraph{Quantum-specific checks.}
When a task verifies a program, the review in check~\ref{check:faithful}
also examines whether the program implements the intended algorithm, with
each gate defined by an explicit unitary in the library and each reused
component a verified subprogram (Section~\ref{ssec:ex-overview}). For each
example, Lean verifies that the precondition has positive expectation on
some input cq-state. This
excludes preconditions with zero expectation on every input, for which the
left-hand side of~\eqref{eq:validity} vanishes and the Hoare triple holds
for every program. For program tasks, the statements and definitions protected by
check~\ref{check:scope} include the program, the target Hoare triple, the
proof systems, and the soundness proofs.

\definecolor{qpvink}{HTML}{253247}
\definecolor{qpvmuted}{HTML}{526174}
\definecolor{qpvartifill}{HTML}{FAF6E8}
\definecolor{qpvagentblue}{HTML}{0072B2}
\definecolor{qpvfailred}{HTML}{B84D19}
\definecolor{qpvpassgreen}{HTML}{007C60}
\newcommand{\qpvlean}{\mbox{\textsf{L}\kern-.12em$\exists\forall$\kern-.12em\textsf{N}}}
\tikzset{
  qpvbase/.style={align=center, font=\small, text=qpvink,
    rounded corners=2pt, inner xsep=1.2mm, inner ysep=1mm, line width=.7pt},
  qpvagent/.style={qpvbase, draw=qpvagentblue, fill=qpvagentblue!6},
  qpvtool/.style={qpvbase, draw=qpvink!65, fill=white},
  qpvartifact/.style={qpvbase, draw=qpvink!65, fill=qpvartifill},
  qpvcheck/.style={qpvtool, font=\footnotesize, inner xsep=1mm, inner ysep=.8mm},
  qpvflow/.style={-{Latex[length=2mm]}, draw=qpvink, line width=.85pt},
  qpvrepair/.style={qpvflow, draw=qpvfailred},
  qpvlabel/.style={font=\scriptsize, fill=white, inner sep=1pt,
    text=qpvmuted, align=center},
}
\begin{figure*}[htp]
\centering
\resizebox{\textwidth}{!}{%
\begin{tikzpicture}[x=1mm,y=1mm]
\node[qpvartifact, text width=163mm, minimum height=12mm] (qpvguides) at (86,0)
  {\faListOl\quad\textbf{Shared proof-writing guidelines}\\[-1pt]
   {\footnotesize Preserve the target; reuse existing results; place reusable lemmas in the library.}};

\begin{scope}[on background layer]
  \filldraw[fill=qpvagentblue!2, draw=qpvink!25, rounded corners=3pt]
    (4,-17) rectangle (84,-148);
  \filldraw[fill=qpvagentblue!2, draw=qpvink!25, rounded corners=3pt]
    (96,-17) rectangle (172,-101);
\end{scope}
\node[font=\small\bfseries,text=qpvink] at (44,-21) {Leader};
\node[font=\small\bfseries,text=qpvink] at (134,-21) {Subagents};
\node[qpvagent,text width=65mm,minimum height=18mm] (qpvleader) at (44,-35)
  {\faRobot\quad\textbf{Leader: define and assign tasks}\\[1pt]
   {\footnotesize Input: statement or mathematical source\\
    Write and review task specifications.}};
\node[qpvagent,text width=65mm,minimum height=18mm] (qpvworker) at (134,-35)
  {\faRobot\quad\textbf{Proof-writing subagents}\\[1pt]
   {\footnotesize Construct and repair proofs;\\
    review them against the guidelines.}};
\draw[qpvflow] (qpvleader.east) --
  node[qpvlabel,above=1mm]{tasks} (qpvworker.west);
\coordinate (qpvleadguide) at ($(qpvleader.north west)+(7mm,0)$);
\draw[qpvflow,draw=qpvmuted,densely dotted] (qpvguides.south -| qpvleadguide) -- (qpvleadguide);
\coordinate (qpvworkguide) at ($(qpvworker.north west)+(7mm,0)$);
\draw[qpvflow,draw=qpvmuted,densely dotted] (qpvguides.south -| qpvworkguide) -- (qpvworkguide);

\node[qpvtool,text width=65mm,minimum height=18mm] (qpvtools) at (134,-62)
  {\qpvlean\quad\faPython\quad\textbf{Lean tools and local checks}\\[1pt]
   {\footnotesize Goals, diagnostics, builds, lemma search;\\
    tactics, local axiom and source checks.}};
\draw[qpvflow,{Latex[length=2mm]}-{Latex[length=2mm]}]
  (qpvworker.south) -- node[qpvlabel,right=1mm]{edit / check / repair} (qpvtools.north);
\node[qpvtool,text width=65mm,minimum height=15mm] (qpvsubmit) at (134,-89)
  {\textbf{Submit proofs}\\[1pt]
   {\footnotesize The leader decides acceptance.}};
\draw[qpvflow] (qpvtools.south) -- (qpvsubmit.north);

\node[qpvcheck,text width=65mm,minimum height=20mm] (qpvcontent) at (44,-68)
  {\textbf{1. Faithful statements}\\[1pt]
   \faRobot\; Leader reviews the intended mathematics.\\
   \qpvlean\; Check validity-theorem types.};
\node[qpvcheck,text width=30mm,minimum height=16mm] (qpvbuild) at (25.5,-90)
  {\textbf{2. Clean build}\\[1pt]
   \qpvlean\;\faPython\\
   Integrated library; required settings};
\node[qpvcheck,text width=30mm,minimum height=16mm] (qpvaxioms) at (62.5,-90)
  {\textbf{3. Axiom dependencies}\\[1pt]
   \qpvlean\;\faPython\\
   Required theorems};
\node[qpvcheck,text width=30mm,minimum height=16mm] (qpvsource) at (25.5,-110)
  {\textbf{4. Kernel-checked}\\[-1pt]\textbf{proofs}\\[1pt]
   \qpvlean\;\faRobot\; Source checks\\
   No \texttt{sorry}, \texttt{admit},\\[-1pt]
   or \texttt{native\_decide}};
\node[qpvcheck,text width=30mm,minimum height=16mm] (qpvscope) at (62.5,-110)
  {\textbf{5. Protected statements}\\[-1pt]\textbf{and definitions}\\[1pt]
   \faRobot\\
   Baseline comparison\\
   Permitted changes};
\begin{scope}[on background layer]
  \node[draw=qpvink,fill=white,dashed,rounded corners=3pt,
    line width=1pt,inner xsep=1.5mm,inner ysep=3mm,
    fit=(qpvcontent)(qpvbuild)(qpvaxioms)(qpvsource)(qpvscope)] (qpvgate) {};
\end{scope}
\node[font=\small\bfseries,fill=white,text=qpvink,inner sep=1mm]
  at (qpvgate.north) {Acceptance checks};
\draw[qpvflow] (qpvsubmit.west) -- (qpvgate.east |- qpvsubmit.west);
\node[qpvtool,text width=65mm,minimum height=12mm,
  draw=qpvpassgreen,fill=qpvpassgreen!8] (qpvaccept) at (44,-137)
  {\textbf{Accept the combined changes}\\[1pt]
   {\footnotesize All five checks pass.}};
\draw[qpvflow,draw=qpvpassgreen] (qpvgate.south) -- (qpvaccept.north);
\draw[qpvrepair] (qpvgate.west) -- (0,0 |- qpvgate.west) -- (0,-35) -- (qpvleader.west);
\node[qpvlabel,rotate=90,text=qpvfailred] at (0,-62) {failure: repair};

\node[qpvartifact,text width=65mm,minimum height=32mm,
  dashed,line width=.9pt] (qpvquantum) at (134,-131)
  {\faPuzzlePiece\quad\textbf{Quantum extension}\\[2pt]
   {\footnotesize Apply \qhl{} rules\\[2pt]
    Quantum-specific checks}\\[2pt]
   {\scriptsize Used to build the library and to verify programs.}};
\draw[qpvflow,densely dotted,draw=qpvmuted]
  (qpvquantum.east) -- (177,-131) -- (177,-35) -- (qpvworker.east);
\draw[white,line width=2.8pt] (90,-92) -- (90,-86);
\draw[qpvflow,densely dotted,draw=qpvmuted]
  (qpvquantum.west) -- (90,-131) -- (90,-39)
  -- ($(qpvleader.east)+(0,-4mm)$);
\end{tikzpicture}}
\vspace{1mm}

{\footnotesize
\faRobot~LLM review\qquad
\qpvlean~Lean~4 checking\qquad
\faPython~Python scripts}
\caption{The workflow of the core agent with the quantum extension. Icons
show whether Lean, Python scripts, or LLM review performs each acceptance
check of Section~\ref{sec:qpv-orch}. Dotted arrows show where the agents
use the guidelines and the quantum extension.}
\label{fig:qpv}
\end{figure*}

\section{Verified Program Examples}\label{sec:suite}

We present eleven verified program examples, constructed using the prover agent
(Section~\ref{sec:qpv}).

\subsection{Overview}\label{ssec:ex-overview}\label{ssec:ex-basic}

Table~\ref{tab:examples} summarizes the language features, verified properties, and subprogram reuse of these examples.
Examples marked S reuse the verified derivations of the subprograms in the
Reuse column. Each such subprogram is placed in the larger program by
renaming its quantum variables with an injection and relating its
classical state to that of the larger program.
All Hoare triples are verified for total correctness, which accounts for termination.

\begingroup
\footnotesize
\setlength{\tabcolsep}{3pt}
\setlength{\LTcapwidth}{\textwidth}
\begin{xltabular}{\linewidth}{@{}L{9.5em} L{4.3em} Y L{6.6em}@{}}
\caption{Overview of program examples. They illustrate the following program features: L (loop), NL (nested
loops), A (quantum arrays), P (classical inputs), C (conditionals),
M (measurement-controlled flow), and S (subprogram reuse).}\label{tab:examples} \\
\toprule
Example & Features & Verified property & Reuse \\
\midrule
\endfirsthead
\multicolumn{4}{c}{\small\itshape Table~\thetable{} (continued)} \\
\toprule
Example & Features & Verified property & Reuse \\
\midrule
\endhead
\bottomrule
\endfoot
\bottomrule
\endlastfoot
Bell state & -- & prepares $\ket{\Phi^+}$ from $\ket{00}$ & -- \\
Parallel Hadamard & L, A & prepares $\ket{+}^{\otimes n}$ for $n\ge1$ & -- \\
QFT & NL, A, P & implements the Fourier transform for $n\ge1$ & -- \\
Inverse QFT & NL, A, P & carries $\ket{\chi(n,x)}$ to $\ket{x}$ & -- \\
QPE & NL, A, P, S & recovers the $n$-bit eigenphase $\bar x/2^n$ of $U$ exactly & Parallel Hadamard, inverse QFT \\
Cluster state & L, A & prepares the 1D cluster state $\ket{C_n}$ for $n\ge1$ & -- \\
RUS & L, M, P & applies $V_{\RUS}$ with probability at least $1-4^{-N}$ & -- \\
Trotter & NL, A, P & for $N\ge1$ and $\varepsilon\ge\theta^2\|[H_X,H_{ZZ}]\|/(2N)$, every input $\ket\psi$ has output fidelity at least $1-\varepsilon$ with $e^{-i\theta H}\ket\psi$ & -- \\
Bravyi--Kitaev & NL, A, C & implements $\ket f\mapsto\ket{\beta_{2^x}f}$ & -- \\
Fermi--Hubbard LCU & NL, A, P & for $\lambda>0$, block-encodes $H/\lambda$ with success probability $\|H\ket\psi\|^2/\lambda^2$ & -- \\
Fermi--Hubbard QSP & NL, A, P, C, S & for $\mathrm{spec}(H)\subset(-\lambda,\lambda)$, block-encodes $P_\Phi(H/\lambda)\,\beta_{2^x}$ & Bravyi--Kitaev, Fermi\nobreakdash--Hubbard LCU \\
\end{xltabular}
\endgroup

In the following, we present the verification for the QFT, RUS, and Trotter programs introduced
in Section~\ref{sec:motivating}. Appendix~\ref{app:outlines} presents the
remaining eight program examples, with the three building blocks for simulating the
1D spinless Fermi--Hubbard model grouped in Appendix~\ref{ssec:ex-fermion}.

\subsection{Quantum Fourier Transform}\label{ssec:ex-qft}

We first derive the Hoare triple \eqref{eq:qfttriple} for the QFT program in Example~\ref{ssec:qft}.
The \hyperref[fig:qftoutline]{QFT proof outline} presents this derivation.
In a proof outline~\cite{OwickiGries1976,AptOlderog2009}, the program is annotated with assertions. Each program
statement, together with the assertions before and after it, forms a
Hoare triple. The right column gives the proof rules used in deriving each Hoare triple. (Rule-Seq) is used implicitly to sequentially combine Hoare triples.
The target Hoare triple is formed by the first and last
assertions in the proof outline.

We explain the notation as follows.
For simplicity of presentation, we write a pure state $\ket{\varphi}$ for its projector $\ketbra\varphi$ in Hoare triples and write quantum predicates directly as operator expressions. The invariant names are local to each example unless explicitly reused.
Let
$0.x[i{:}n{-}1] = \sum_{t=i}^{n-1} x[t]\,2^{\,i-t-1}$,
$\theta(i) = 2\pi \cdot 0.x[i{:}n{-}1]$, and $
  \ket{\psi(\alpha)} \;=\; \tfrac{1}{\sqrt{2}}\bigl(\ket{0} +
  e^{i\alpha}\ket{1}\bigr)$.
Then, the final Fourier basis state can be written as
\begin{equation}\label{eq:fourierket}
  \ket{\chi(n,x)}
  \;=\; \bigotimes_{j=0}^{n-1} \ket{\psi(\theta(n{-}1{-}j))}.
\end{equation}

The key components of the \hyperref[fig:qftoutline]{QFT proof outline} are
the three loops at lines 2, 5, and 12, where we use (Rule-Loop-tot).
The first loop invariant is $\{i\le n,\inv(i)\}$, where
\[
  \inv(i)=\bigotimes_{m=0}^{i-1}\ket{\psi(\theta(m))}
       \otimes\bigotimes_{m=i}^{n-1}\ket{x[m]},
\]
meaning the first $i$ quantum variables already carry their Fourier
phases. The second loop invariant is $\{i<n\wedge j\le n-1-i,\inv'(i,j)\}$, where
\[
  \inv'(i,j)=\bigotimes_{m=0}^{i-1}\ket{\psi(\theta(m))}\otimes
  \ket{\psi\!\left(x[i]\pi+\sum_{t=0}^{j-1}x[i{+}1{+}t]
    \frac{2\pi}{2^{t+2}}\right)}\otimes
  \bigotimes_{m=i+1}^{n-1}\ket{x[m]}.
\]
The third loop invariant is $\{i\le\lfloor n/2\rfloor,\inv''(i)\}$, where
\[
  \inv''(i)=\Bigl(\textstyle\prod_{m=0}^{i-1}
    \SWAP[q[m],q[n{-}1{-}m]]\Bigr)
    \bigotimes_{m=0}^{n-1}\ket{\psi(\theta(m))}.
\]

\begin{proofoutline}{fig:qftoutline}{Quantum Fourier transform}

\begin{outlinerows}

\qassert{0}
  {\mathit{true},\ \ket{x}}
  {}

\qline{1}{0}
  {i := 0}
  {(Axiom-Ass), (Rule-Conseq)}

\qassert{0}
  {i=0,\ \inv(i)}
  {$\inv(i)$: $i$ iterations completed; $\inv(0)=\ket{x}$}

\qline{2}{0}
  {\kw{while}\ i<n\ \kw{do}}
  {(Rule-Loop-tot), $t=n-i$}

\qassert{1}
  {i<n,\ \inv(i)}
  {}

\qline{3}{1}
  {H[q[i]]}
  {(Axiom-Uni)}

\qassert{1}
  {i<n,\ \inv'(i,0)}
  {}

\qline{4}{1}
  {j := 0}
  {(Axiom-Ass), (Rule-Conseq)}

\qassert{1}
  {i<n \wedge j=0,\ \inv'(i,j)}
  {$\inv'(i,j)$: after $j$ controlled phase gates}

\qline{5}{1}
  {\kw{while}\ j<n-1-i\ \kw{do}}
  {(Rule-Loop-tot), $t=(n-1-i)-j$}

\qassert{2}
  {i<n \wedge j<n-1-i,\ \inv'(i,j)}
  {}

\qline{6}{2}
  {\cR(2\pi/2^{\,j+2})[q[i],q[i+1+j]]}
  {(Axiom-Uni)}

\qassert{2}
  {i<n \wedge j<n-1-i,\ \inv'(i,j+1)}
  {}

\qline{7}{2}
  {j := j+1}
  {(Axiom-Ass), (Rule-Conseq)}

\qassert{2}
  {i<n\wedge j\le n-1-i,\ \inv'(i,j)}
  {}

\qline{8}{1}
  {\kw{od}}
  {}

\qassert{1}
  {i<n \wedge j=n-1-i,\ \inv'(i,j)}
  {}

\qassert{1}
  {i<n,\ \inv(i+1)}
  {(Rule-Conseq)}

\qline{9}{1}
  {i := i+1}
  {(Axiom-Ass), (Rule-Conseq)}

\qassert{1}
  {i\le n,\ \inv(i)}
  {}

\qline{10}{0}
  {\kw{od}}
  {}

\qassert{0}
  {i=n,\ \inv(n)}
  {}

\qassert{0}
  {\mathit{true},\
   \displaystyle\bigotimes_{m=0}^{n-1}\ket{\psi(\theta(m))}}
  {(Rule-Conseq)}

\qline{11}{0}
  {i := 0}
  {(Axiom-Ass), (Rule-Conseq)}

\qassert{0}
  {i=0,\ \inv''(i)}
  {$\inv''(i)$: the first $i$ swaps done}

\qline{12}{0}
  {\kw{while}\ i<\lfloor n/2\rfloor\ \kw{do}}
  {(Rule-Loop-tot), $t=\lfloor n/2\rfloor-i$}

\qassert{1}
  {i<\lfloor n/2\rfloor,\ \inv''(i)}
  {}

\qline{13}{1}
  {\SWAP[q[i],q[n-1-i]]}
  {(Axiom-Uni)}

\qassert{1}
  {i<\lfloor n/2\rfloor,\ \inv''(i+1)}
  {}

\qline{14}{1}
  {i := i+1}
  {(Axiom-Ass), (Rule-Conseq)}

\qassert{1}
  {i\le\lfloor n/2\rfloor,\ \inv''(i)}
  {}

\qline{15}{0}
  {\kw{od}}
  {}

\qassert{0}
  {i=\lfloor n/2\rfloor,\
   \inv''(\lfloor n/2\rfloor)}
  {}

\qassert{0}
  {\mathit{true},\ \ket{\chi(n,x)}}
  {(Rule-Conseq)}

\end{outlinerows}
\end{proofoutline}

At line 1, $\inv(0)=\ket{x}$. The Hadamard gate at
line 3 converts $\inv(i)$ into $\inv'(i,0)$,
which is turned into $\inv'(i,j)$ by the second $\mathbf{while}$ loop. Inside the loop, at line 6, the controlled rotation adds the phase contributed by $x[i+1+j]$ through $\cR(2\pi/2^{j+2})[q[i],q[i+1+j]]\inv'(i,j)=\inv'(i,j+1)$.
From $\inv'(i,n-1-i)=\inv(i+1)$, the first $\mathbf{while}$ loop ends at line 10 with $\inv(n)$.
The third $\mathbf{while}$ loop reverses the qubit order by $\SWAP[q[i],q[n-1-i]]\inv''(i)=\inv''(i+1)$.

The Hoare triple~\eqref{eq:qfttriple} determines the output of each
computational basis input up to a phase. To establish correctness on arbitrary inputs,
we also derive the following Hoare triple for $n\geq1$:
\begin{equation}
  \models_{\mathit{tot}}\quad
  \bigl\{\ket{+}^{\otimes n}\bigr\}\ \QFT_n\
  \bigl\{\ket{0}^{\otimes n}\bigr\}.
  \label{eq:qft-uniform}
\end{equation}
Appendix~\ref{ssec:qft-phase} derives~\eqref{eq:qft-uniform} and shows that
\eqref{eq:qfttriple} and~\eqref{eq:qft-uniform} together determine the output
on every input up to a common global phase.

\subsection{Repeat-Until-Success}\label{ssec:ex-rus}

We next derive the Hoare triples~\eqref{eq:rustriple}
and~\eqref{eq:russuccess} for the RUS program in Example~\ref{ssec:rus}.
The \hyperref[fig:rusoutline]{RUS proof outline} presents the derivation
of~\eqref{eq:rustriple}.

We explain the notation as follows.
The gate $W_{\RUS}$ is explicitly given by~\cite[Fig.~8]{PaetznickSvore2014}:
\begin{equation}\label{eq:wrus}
W_{\RUS} \;=\; (H \otimes I)\,(T \otimes I)\,\CNOT[a,d]\,
  (H \otimes I)\,\CNOT[a,d]\,(T \otimes I)\,(H \otimes I),
\end{equation}
where $T$ is the phase-$\frac{\pi}{4}$ gate, and $a$ is the first tensor factor and the control of each CNOT. It satisfies, for all $\ket{\psi_0}$,
\begin{equation}\label{eq:branch}
 W_{\RUS}\bigl(\ket0\otimes\ket{\psi_0}\bigr)
 =\alpha\,\ket0\otimes V_{\RUS}\ket{\psi_0}+\beta\,\ket1\otimes\ket{\psi_0},
\end{equation}
where $|\alpha|^2=3/4$ and $|\beta|^2=1/4$.
Let $\ket{v_b}=\ket b\otimes\ket{\psi_0}$ for $b\in\{0,1\}$. Then,
$I\otimes\ketbra{\psi_0}=\ketbra{v_0}+\ketbra{v_1}$, and~\eqref{eq:branch} means
$W_{\RUS}\ket{v_0}=\alpha\ket w+\beta\ket{v_1}$.

The key component of the \hyperref[fig:rusoutline]{RUS proof outline} is
the $\mathbf{while}$ loop at line 3, where we use (Rule-Loop-tot).
The loop invariant is $\{k\le N,\inv(m)\}$, with $\inv(m)$ defined
in~\eqref{eq:rus-invariant}, meaning that the data qubit carries
$V_{\RUS}\ket{\psi_0}$ if $m=0$ and is unchanged if $m=1$,
so that each attempt restarts from $\ket{v_0}$.

\begin{proofoutline}{fig:rusoutline}{Repeat-until-success}
\begin{outlinerows}

\qassert{0}
  {N\ge1,\ I \otimes \ketbra{\psi_0}}
  {}

\qline{1}{0}
  {m := 1}
  {(Axiom-Ass), (Rule-Conseq)}

\qassert{0}
  {N\ge1\wedge m = 1,\ \ketbra{v_0} + \ketbra{v_1}}
  {}

\qline{2}{0}
  {k := 0}
  {(Axiom-Ass), (Rule-Conseq)}

\qassert{0}
  {N\ge1\wedge m = 1 \wedge k = 0,\ \inv(m)}
  {}

\qline{3}{0}
  {\kw{while}\ (m = 1 \wedge k < N)\ \kw{do}}
  {(Rule-Loop-tot), $t = N{-}k$}

\qassert{1}
  {m = 1 \wedge k < N,\ \ketbra{v_0} + \ketbra{v_1}}
  {}

\qline{4}{1}
  {a := \ket{0}}
  {(Axiom-Init)}

\qassert{1}
  {m = 1 \wedge k < N,\ \ket{v_0}}
  {}

\qline{5}{1}
  {W_{\RUS}[a,d]}
  {(Axiom-Uni)}

\qassert{1}
  {m = 1 \wedge k < N,\ \alpha\ket{w} + \beta\ket{v_1}}
  {}

\qline{6}{1}
  {m := M[a]}
  {(Axiom-Meas), (Rule-Conseq)}

\qassert{1}
  {k < N,\ (1{-}m)\,\ketbra{w} + m\,\ketbra{v_1}}
  {}

\qline{7}{1}
  {k := k + 1}
  {(Axiom-Ass), (Rule-Conseq)}

\qassert{1}
  {k\le N,\ \inv(m)}
  {}

\qline{8}{0}
  {\kw{od}}
  {}

\qassert{0}
  {k\le N\wedge\neg(m=1\wedge k<N),\ \inv(m)}
  {}
\qassert{0}
  {m = 0 \vee k \geq N,\ \inv(m)}
  {(Rule-Conseq)}

\end{outlinerows}
\end{proofoutline}

At line 6, (Axiom-Meas) sums over both outcomes of measuring $a$, which updates
the classical variable $m$ controlling the loop. It converts the postcondition
$(1-m)\ketbra w+m\ketbra{v_1}$ of line 6 into
$\ketbra w+\ketbra{v_1}\ge\ketbra{\alpha w+\beta v_1}$.
The loop ends at line 8 with $m=0\vee k\ge N$, which is the postcondition
of~\eqref{eq:rustriple}.

We then derive the success bound~\eqref{eq:russuccess} in the
\hyperref[fig:russuccessoutline]{RUS success proof outline}, using the loop
invariant $\{k\le N\wedge(m=0\vee k<N),\ \inv'(m,k)\}$, where
\[
 \inv'(m,k)=(1-m)\ketbra w+
 m\,p_k\bigl(\ketbra{v_0}+\ketbra{v_1}\bigr),\qquad
 p_k=1-|\beta|^{2(N-k)}.
\]
For $m=1$, $p_k$ is the probability of at least one success within the
remaining $N-k$ attempts. At loop exit, the classical part excludes the
exhausted branch $m=1\wedge k=N$, so only successful outputs are counted
in~\eqref{eq:russuccess}.

\begin{proofoutline}{fig:russuccessoutline}{Repeat-until-success: success probability}
\begin{outlinerows}
\qassert{0}
 {N\ge1,\ (1-4^{-N})(I\otimes\ketbra{\psi_0})}{}
\qline{1}{0}{m:=1}{(Axiom-Ass), (Rule-Conseq)}
\qassert{0}{N\ge1\wedge m=1,\ \inv'(m,0)}{}
\qline{2}{0}{k:=0}{(Axiom-Ass), (Rule-Conseq)}
\qassert{0}{N\ge1\wedge m=1\wedge k=0,\ \inv'(m,k)}{}
\qline{3}{0}{\kw{while}\ (m=1\wedge k<N)\ \kw{do}}
 {(Rule-Loop-tot), $t=N-k$}
\qassert{1}{m=1\wedge k<N,\
 p_k(\ketbra{v_0}+\ketbra{v_1})}{}
\qline{4}{1}{a:=\ket0}{(Axiom-Init), (Rule-Conseq)}
\qassert{1}{m=1\wedge k<N,\
 W_{\RUS}^{\dagger}\bigl(\ketbra w+p_{k+1}\ketbra{v_1}\bigr)W_{\RUS}}{}
\qline{5}{1}{W_{\RUS}[a,d]}{(Axiom-Uni)}
\qassert{1}{m=1\wedge k<N,\ \ketbra w+p_{k+1}\ketbra{v_1}}{}
\qline{6}{1}{m:=M[a]}
 {(Axiom-Meas), (Rule-Mask), (Rule-Conseq)}
\qassert{1}{k<N\wedge(m=0\vee k+1<N),\ \inv'(m,k+1)}{}
\qline{7}{1}{k:=k+1}{(Axiom-Ass), (Rule-Conseq)}
\qassert{1}{k\le N\wedge(m=0\vee k<N),\ \inv'(m,k)}{}
\qline{8}{0}{\kw{od}}{}
\qassert{0}{k\le N\wedge(m=0\vee k<N)\wedge\neg(m=1\wedge k<N),\ \inv'(m,k)}{}
\qassert{0}{m=0,\ \ketbra w}{(Rule-Conseq)}
\end{outlinerows}
\end{proofoutline}

The quantum invariant is preserved since
$|\alpha|^2+|\beta|^2p_{k+1}=p_k$ for $k<N$, that is, the
probability $|\alpha|^2$ of $m=0$ in the current attempt plus the probability
$|\beta|^2p_{k+1}$ of $m=1$ followed by a later success.
At line 6, when $k+1=N$, the outcome $m=1$ would violate $m=0\vee k+1<N$.
We therefore apply (Axiom-Meas) with the mask $m=0\vee k+1<N$ on
$\inv'(m,k+1)$, and the precondition is still the assertion after line 5
because $p_N=0$.
Then (Rule-Mask) converts the mask into the classical postcondition of line 6.

In both outlines, $k\le N$ keeps the bound function $N-k$ nonnegative, and the loop
body decreases it under $k<N$.

\subsection{The Trotter Product-Formula Circuit}\label{ssec:ex-ising}\label{ssec:ex-trotter}

We then derive the Hoare triple~\eqref{eq:trottertriple} for the Trotter
program in Example~\ref{ssec:trotter}. The
\hyperref[fig:trotteroutline]{Trotter proof outline} presents this
derivation.

We explain the notation as follows.
Let $A_j(\delta)$ and $B_j(\delta)$ be the products of the first $j$ rotations
applied by the second and third $\mathbf{while}$ loops, respectively:
\begin{equation}\label{eq:trotter-partial-products}
\begin{aligned}
 A_0(\delta)&=B_0(\delta)=I,\\
 A_{j+1}(\delta)&=R_x(2a[j]\delta)[q[j]]\,A_j(\delta),\\
 B_{j+1}(\delta)&=R_{zz}(2b[j]\delta)[q[j],q[j+1]]\,B_j(\delta).
\end{aligned}
\end{equation}
Let $A(\delta)=A_n(\delta)$ and $B(\delta)=B_{n-1}(\delta)$, which means
$A(\delta)=e^{-i\delta H_X}$ and $B(\delta)=e^{-i\delta H_{ZZ}}$. Each
iteration of the first $\mathbf{while}$ loop therefore applies
$W=B(\delta)A(\delta)=e^{-i\delta H_{ZZ}}e^{-i\delta H_X}$.
Let $\Phi_\varepsilon$ denote the classical precondition
$N\ge1\wedge\varepsilon\ge\theta^2\|[H_X,H_{ZZ}]\|/(2N)$
of~\eqref{eq:trottertriple}. Define $c=(1+\varepsilon)^{-1}$.
Since $a,b,\theta,N,\varepsilon\notin\mathit{change}(\Trotter_n)$, $\Phi_\varepsilon$ remains
in every invariant and supplies the assumptions of the final (Rule-Conseq) step.

The key components of the \hyperref[fig:trotteroutline]{Trotter proof
outline} are the three loops at lines 2, 4, and 9, where we use
(Rule-Loop-tot). The first loop invariant is
$\{\Phi_\varepsilon\wedge i\le N,\ \inv(i)\}$, where
\[
 \inv(i)=c\ketbra{W^i\psi},
\]
meaning that the first $i$ iterations of the first $\mathbf{while}$ loop are completed. The second loop invariant is
$\{\Phi_\varepsilon\wedge i<N\wedge j\le n,\ \inv'(i,j)\}$, where
\[
 \inv'(i,j)=c\ketbra{A_j(\delta)W^i\psi},
\]
meaning that the first $j$ rotations $R_x$ of the current iteration are
applied. The third loop invariant is
$\{\Phi_\varepsilon\wedge i<N\wedge j\le n-1,\ \inv''(i,j)\}$, where
\[
 \inv''(i,j)=c\ketbra{B_j(\delta)A(\delta)W^i\psi},
\]
meaning that all rotations $R_x$ and the first $j$ rotations $R_{zz}$ of
the current iteration are applied.

\begin{proofoutline}{fig:trotteroutline}{Trotter product-formula circuit}
\begin{outlinerows}

\qassert{0}
  {\Phi_\varepsilon,\ c\ketbra\psi}
  {}

\qline{1}{0}
  {i := 0}
  {(Axiom-Ass), (Rule-Conseq)}

\qassert{0}
  {\Phi_\varepsilon\wedge i = 0,\ \inv(i)}
  {}

\qline{2}{0}
  {\kw{while}\ i < N\ \kw{do}}
  {(Rule-Loop-tot), $t = N{-}i$}

\qassert{1}
  {\Phi_\varepsilon\wedge i < N,\ \inv(i)}
  {}

\qline{3}{1}
  {j := 0}
  {(Axiom-Ass), (Rule-Conseq)}

\qassert{1}
  {\Phi_\varepsilon\wedge i < N \wedge j = 0,\ \inv'(i,j)}
  {}

\qline{4}{1}
  {\kw{while}\ j < n\ \kw{do}}
  {(Rule-Loop-tot), $t = n{-}j$}

\qassert{2}
  {\Phi_\varepsilon\wedge i < N \wedge j < n,\ \inv'(i,j)}
  {}

\qline{5}{2}
  {R_x(2a[j]\delta)[q[j]]}
  {(Axiom-Uni)}

\qassert{2}
  {\Phi_\varepsilon\wedge i < N \wedge j < n,\ \inv'(i,j{+}1)}
  {}

\qline{6}{2}
  {j := j + 1}
  {(Axiom-Ass), (Rule-Conseq)}

\qassert{2}
  {\Phi_\varepsilon\wedge i < N\wedge j\le n,\ \inv'(i,j)}
  {}

\qline{7}{1}
  {\kw{od}}
  {}

\qassert{1}
  {\Phi_\varepsilon\wedge i < N \wedge j = n,\ \inv'(i,n)}
  {}

\qassert{1}
  {\Phi_\varepsilon\wedge i < N,\ c\ketbra{A(\delta)W^i\psi}}
  {(Rule-Conseq)}

\qline{8}{1}
  {j := 0}
  {(Axiom-Ass), (Rule-Conseq)}

\qassert{1}
  {\Phi_\varepsilon\wedge i < N \wedge j = 0,\ \inv''(i,j)}
  {}

\qline{9}{1}
  {\kw{while}\ j < n{-}1\ \kw{do}}
  {(Rule-Loop-tot), $t = (n{-}1){-}j$}

\qassert{2}
  {\Phi_\varepsilon\wedge i < N \wedge j < n{-}1,\ \inv''(i,j)}
  {}

\qline{10}{2}
  {R_{zz}(2b[j]\delta)[q[j], q[j{+}1]]}
  {(Axiom-Uni)}

\qassert{2}
  {\Phi_\varepsilon\wedge i < N \wedge j < n{-}1,\ \inv''(i,j{+}1)}
  {}

\qline{11}{2}
  {j := j + 1}
  {(Axiom-Ass), (Rule-Conseq)}

\qassert{2}
  {\Phi_\varepsilon\wedge i < N\wedge j\le n-1,\ \inv''(i,j)}
  {}

\qline{12}{1}
  {\kw{od}}
  {}

\qassert{1}
  {\Phi_\varepsilon\wedge i < N \wedge j = n{-}1,\ \inv''(i,n{-}1)}
  {}

\qassert{1}
  {\Phi_\varepsilon\wedge i < N,\ \inv(i{+}1)}
  {(Rule-Conseq)}

\qline{13}{1}
  {i := i + 1}
  {(Axiom-Ass), (Rule-Conseq)}

\qassert{1}
  {\Phi_\varepsilon\wedge i\le N,\ \inv(i)}
  {}

\qline{14}{0}
  {\kw{od}}
  {}

\qassert{0}
  {\Phi_\varepsilon\wedge i = N,\ \inv(i)}
  {}

\qassert{0}
  {\Phi_\varepsilon,\ c\ketbra{W^N\psi}}
  {(Rule-Conseq)}

\qassert{0}
  {\mathit{true},\ c\bigl(\ketbra{U_\theta\psi}+\varepsilon I\bigr)}
  {(Rule-Conseq)}

\end{outlinerows}
\end{proofoutline}
At line 1, $\inv(0)=c\ketbra\psi$. By~\eqref{eq:trotter-partial-products},
the rotation at line 5 converts $\inv'(i,j)$ into $\inv'(i,j+1)$, and the
rotation at line 10 converts $\inv''(i,j)$ into $\inv''(i,j+1)$.
From $B(\delta)A(\delta)=W$, we have $\inv''(i,n-1)=\inv(i+1)$, and the first
$\mathbf{while}$ loop ends at line 14 with $\inv(N)=c\ketbra{W^N\psi}$.

Only the last (Rule-Conseq) step uses the condition on $\varepsilon$ in $\Phi_\varepsilon$.
It relies on the first-order Trotter error bound~\cite{Childs2021TrotterError}
\begin{equation}\label{eq:trotter-error}
 \|W^N-U_\theta\|\le\frac{\theta^2}{2N}\|[H_X,H_{ZZ}]\|,
\end{equation}
which sums the one-step error $\delta^2\|[H_X,H_{ZZ}]\|/2$ over the $N$
iterations. Under $\Phi_\varepsilon$, \eqref{eq:trotter-error} gives
$\|W^N\psi-U_\theta\psi\|\le\varepsilon$. Hence
$\ketbra{W^N\psi}\le\ketbra{U_\theta\psi}+\varepsilon I$, since projectors
onto unit vectors satisfy
$\|\ketbra u-\ketbra v\|\le\|u-v\|$.

\section{Related Work}\label{sec:related}

\paragraph{Quantum Hoare logics.}
Quantum Hoare logic (QHL)~\cite{Ying2012QHL} provides sound and relatively
complete proof systems for partial and total correctness of quantum
while-programs without classical variables. The applied quantum Hoare
logic~\cite{Zhou2019aQHL} restricts preconditions and postconditions to
projections, which simplifies the proof rules for larger programs. For
programs with classical variables, the logic of~\cite{FengYing2021} uses
families of Hermitian operators indexed by classical states as
preconditions and postconditions. \qhl{}~\cite{Ying2025} instead pairs a
classical first-order formula with a single quantum predicate formula
parameterized by classical variables. Specifications thus remain compact,
and classical first-order reasoning applies directly to their classical
parts.
\qhl{} also supports quantum arrays and gates whose indices and
parameters are given by classical variables. Earlier quantum Hoare logics
do not support these features~\cite{Ying2025}.

\paragraph{Mechanized quantum program logics.}
The QHL prover~\cite{Liu2019QHLProver} formalizes a finite-dimensional
partial-correctness fragment of QHL in Isabelle/HOL and proves it sound
and complete. CoqQ~\cite{Zhou2023} verifies partial and total correctness
of quantum while-programs in Coq, with a machine-checked soundness proof.
Neither supports classical program variables.
CoqQLR~\cite{Wu2024CoqQLR,wu2025localreasoningprobabilisticbehaviour} formalizes in Coq a logic for local reasoning
about probabilistic properties of classical--quantum programs, with
soundness for trace-preserving programs. The
qrhl-tool~\cite{Unruh2019qrhl} supports quantum relational Hoare logic
(qRHL) in Isabelle/HOL, with soundness proved on paper.

\paragraph{Verification of circuits and hybrid programs.}
QWIRE~\cite{Paykin2017QWIRE} embeds a typed circuit language in Coq with
dynamic lifting of measurement results. SQIR/VOQC~\cite{Hietala2021SQIR}
verifies circuit optimizations in Coq, and a certified implementation of
Shor's algorithm~\cite{Peng2023Shor} builds on SQIR.
Qbricks~\cite{Chareton2021Qbricks} verifies circuit-building programs in
Why3. Qafny~\cite{Li2024Qafny} has a type-guided quantum proof system and
automates verification by translation into Dafny.
Table~\ref{tab:related} summarizes these systems.

\begin{table}[!htbp]
  \centering
  \small
  \setlength{\tabcolsep}{4pt}
  \caption{Quantum program verification tools: QEDIA in context.}
  \label{tab:related}
  \begin{tabularx}{\linewidth}{@{}L{13.6em} L{5.6em} Y L{11.6em}@{}}
    \toprule
    Tool & Assistant & Program model & Classical data/control \\
    \midrule
    CoqQ~\cite{Zhou2023} & Coq & quantum while & no classical variables \\
    QHL prover~\cite{Liu2019QHLProver} & Isabelle/HOL & quantum while & no classical variables \\
    CoqQLR~\cite{wu2025localreasoningprobabilisticbehaviour} & Coq & classical--quantum while & classical variables \\
    qrhl-tool~\cite{Unruh2019qrhl} & Isabelle/HOL & relational (program pairs) & classical variables \\
    QWIRE~\cite{Paykin2017QWIRE} & Coq & circuits & classical wires, dynamic lifting \\
    SQIR/VOQC~\cite{Hietala2021SQIR} & Coq & circuits & measurement branches \\
    Qbricks~\cite{Chareton2021Qbricks} & Why3 & circuit-building programs & circuit construction \\
    Qafny~\cite{Li2024Qafny} & Coq/Dafny & quantum programs (via Dafny) & immutable bindings \\
    \midrule
    QEDIA \textbf{(this work)} & Lean~4 & classical--quantum while & classical variables indexing qubits and gate parameters \\
    \bottomrule
  \end{tabularx}
\end{table}

To our knowledge, QEDIA is the first mechanization of a quantum Hoare
logic that supports quantum arrays and gates parameterized by classical
program variables, with soundness
proved for both partial and total correctness.

\section{Conclusion and Future Work}\label{sec:conclusion}
Quantum programs interleave classical and quantum computation. Verifying
them in a proof assistant requires handling this interplay consistently,
from the mathematics of quantum mechanics to the proof rules. QEDIA does
so for \qhl{} in Lean~4 and proves the logic sound for both partial and
total correctness. On this
foundation, we verified programs ranging from the quantum Fourier
transform and repeat-until-success circuits to Trotter-based Hamiltonian
simulation and quantum signal processing for the Fermi--Hubbard model. LLM agents built the library and these proofs under human
supervision, and Lean's kernel checks every proof.

In the following, we list a few topics for future work.
\begin{itemize}[leftmargin=1.4em]
\item \textbf{Quantum error correction.} Quantum error correction
  underpins fault-tolerant quantum computation. Its syndrome measurement
  outcomes determine, through classical decoding, which corrections are
  applied. This makes it
  a natural next target for QEDIA.
\item \textbf{Quantum recursion.} Quantum recursive
  programs~\cite{YingZhang2026,ZhangYing2025QRM} combine quantum control
  flow with recursive procedure calls. Extending QEDIA to them requires
  handling this combination, since \textsf{qWhile}$^+$ has neither quantum
  control flow nor procedure calls.
\item \textbf{Parallel and distributed programs.} Hoare-style logics have
  been developed for parallel~\cite{Ying2018Parallel} and
  distributed~\cite{FengLiYing2022Distributed} quantum programs.
  Mechanizing them raises further challenges, such as entanglement between
  component programs and the intertwining of nondeterminism from
  measurements and from parallelism~\cite{Ying2018Parallel}.
\end{itemize}

\begin{acks}
\paragraph{Use of AI}
The Lean formalization and proofs were constructed by LLM agents under
human supervision (Section~\ref{sec:qpv}), using OpenAI's GPT-5.5, GPT Sol (versions 5.6, 6,
and 6.1), and GPT-6 Astra, as well as Anthropic's Claude Opus (versions
4.6, 4.8, and 5.5) and Claude Fable (versions 5 and 5.1). We also used
AI assistance during the later stages of the project to organize data
and improve the clarity and language of the manuscript. The formal
proofs are checked by the Lean~4 kernel. The authors take responsibility
for the specifications, interpretation of the results, and final content
of the paper.
\end{acks}

\bibliographystyle{ACM-Reference-Format}
\bibliography{refs}

\appendix

\section{Details of the Formalization}
\label{app}

\subsection{Deriving the Original Measurement Axiom}\label{app:meas}

We derive the measurement axiom of~\cite{Ying2025}, which treats one
outcome at a time, from (Axiom-Meas) and (Rule-Mask). The axiom reads
\[
 \{\varphi[y/x],\ F_M(y)[\bar q](A[y/x])\}\ x:=M[\bar q]\
 \{\varphi\wedge x=y,\ A\}.
\]
Here, $y\notin\mathit{free}(\varphi)\cup\mathit{cv}(A)\cup\{x\}$ is a
classical variable and $F_M(y)=\{M_y^\dagger\}$. The following proof outline
derives it, provided $\varphi[y/x]\models\forall x.\mathrm{WD}(A)$.

\begin{proofoutline}{fig:measoutline}{Single-outcome measurement}
\begin{outlinerows}

\qassert{0}
  {\varphi[y/x],\ F_M(y)[\bar q](A[y/x])}
  {}

\qline{1}{0}
  {x := M[\bar q]}
  {(Axiom-Meas), (Rule-Conseq)}

\qassert{0}
  {\varphi\vee x\neq y,\ (x=y)\cdot A}
  {}

\qassert{0}
  {(\varphi\vee x\neq y)\wedge x=y,\ A}
  {(Rule-Mask)}

\qassert{0}
  {\varphi\wedge x=y,\ A}
  {(Rule-Conseq)}

\end{outlinerows}
\end{proofoutline}

At line 1, (Axiom-Meas) gives the precondition
$\{\forall x.(\varphi\vee x\neq y),\ F_M[\bar q](\{(j=y)\cdot A[j/x]\})\}$.
Since $y$ is fresh, its classical part is equivalent to $\varphi[y/x]$.
In each classical state $\sigma$, the masks keep only the term for
$j=\sigma(y)$, so its quantum part has the same interpretation as
$F_M(y)[\bar q](A[y/x])$. Since $\varphi[y/x]\models\forall x.\mathrm{WD}(A)$, both quantum parts are
well-defined at the same classical states satisfying $\varphi[y/x]$.
(Rule-Conseq) then replaces one
precondition by the other.

\subsection{The RUS Program and Specification in Lean}\label{app:listings}

The following excerpts illustrate how the RUS program of
Figure~\ref{fig:programs}(a) and its success specification~\eqref{eq:russuccess}
are represented in Lean. Programs have type \texttt{Prog}
(Section~\ref{ssec:lean-programs}). The classical-state type \texttt{Store} ($\Sigma$ in Table~\ref{tab:lean-map})
represents $(N,m,k)$, with Boolean values \texttt{false} and \texttt{true}
encoding the outcomes $0$ and $1$. Surrounding namespace declarations
and theorem proofs are omitted.
\begin{lstlisting}[style=leanexcerpt]
noncomputable def body
    (W : UnitaryOp (EuclideanSpace ℂ (Bool × Bool))) : Prog Store D :=
  .seq (.init fun _ => regA.sig) (.seq (.unitary 2 rAT (gateW W))
    (.seq (.measure 2 1 rA measA updM) (.assign incK)))

noncomputable def prog
    (W : UnitaryOp (EuclideanSpace ℂ (Bool × Bool))) : Prog Store D :=
  .seq (.assign initM) (.seq (.assign initK) (.while guard (body W)))
\end{lstlisting}
Here, \texttt{initM}, \texttt{initK}, and \texttt{incK} implement
$m:=1$, $k:=0$, and $k:=k+1$, and \texttt{guard} is $m=1\wedge k<N$.
The expression \texttt{regA.sig} identifies the ancilla to reset;
\texttt{rA} and \texttt{rAT} select $a$ and $(a,d)$.
The operation \texttt{gateW W} applies $W$ to that pair,
\texttt{measA} is the computational-basis measurement, and
\texttt{updM} writes its outcome to $m$. Thus the measurement outcome
controls whether the loop executes again.

Here, \texttt{wPS} and \texttt{vPS}
are $W_{\RUS}$ and $V_{\RUS}$ of Example~\ref{ssec:rus}.
\begin{lstlisting}[style=leanexcerpt]
noncomputable def boundedRusClassicalProg : Prog ClassicalInput.Store D :=
  ClassicalInput.prog wPS

theorem boundedRusClassicalSuccessSourceValid (hψ₀ : ‖ψ₀‖ = 1) :
    validTot (ClassicalInput.preQ betaPS norm_betaPS_le_one ψ₀ hψ₀).toAssn
      boundedRusClassicalProg (ClassicalInput.postQ vPS ψ₀ hψ₀).toAssn := …
\end{lstlisting}
To identify its precondition and postcondition, we use the following local names:
\begin{lstlisting}[style=leanexcerpt]
local notation "Pre" =>
  ClassicalInput.preQ betaPS norm_betaPS_le_one ψ₀ hψ₀
local notation "Post" => ClassicalInput.postQ vPS ψ₀ hψ₀
\end{lstlisting}
At a classical state $\sigma=(N,m,k)$, their components satisfy
\[
\begin{aligned}
 (\texttt{Pre}).\texttt{classical}\ \sigma &\;\longleftrightarrow\; N\ge1,\\
 (\texttt{Post}).\texttt{classical}\ \sigma &\;\longleftrightarrow\; m=0,\\
 (\texttt{Pre}).\texttt{defined}\ \sigma
   &\;\longleftrightarrow\;\mathit{true}
   \;\longleftrightarrow\;(\texttt{Post}).\texttt{defined}\ \sigma,\\
 (\texttt{Pre}).\texttt{quantum.op}\ \sigma
   &= (1-4^{-N})(I\otimes\ketbra{\psi_0}),\\
 (\texttt{Post}).\texttt{quantum.op}\ \sigma &= \ketbra w.
\end{aligned}
\]
Here, $\ket w$ is defined in~\eqref{eq:rus-invariant}. The coefficient in the precondition is
established by \texttt{classicalSuccessPreQ\_op}, using the proved
failure-amplitude identity
\begin{lstlisting}[style=leanexcerpt]
theorem normSq_betaPS : ‖betaPS‖ ^ 2 = 1 / 4 := …
\end{lstlisting}
By the third line of the display, \texttt{toAssn} masks their quantum
predicates only by the displayed classical conditions (Section~\ref{ssec:lean-assertions}).
The validity theorem is~\eqref{eq:russuccess}; it follows from the corresponding
\texttt{SourceDerivTot} derivation by Theorem~\ref{thm:sound}.
Its proof outline is
in Section~\ref{ssec:ex-rus}.

\section{Further Verified Examples}\label{app:outlines}
This appendix presents the remaining eight examples of
Table~\ref{tab:examples}, after extending the QFT example of
Section~\ref{ssec:ex-qft} to arbitrary inputs. The proof outlines follow
the conventions of Section~\ref{ssec:ex-qft}.

\subsection{The QFT on Arbitrary Inputs}\label{ssec:qft-phase}

We derive the Hoare triple~\eqref{eq:qft-uniform} by the
\hyperref[fig:qftoutline]{QFT proof outline}, with the quantum parts of its
three loop invariants replaced by
$\inv(i)=\ket0^{\otimes i}\otimes\ket+^{\otimes(n-i)}$,
$\inv'(i,j)=\inv(i+1)$, and $\inv''(i)=\ket0^{\otimes n}$. At line 3, $H[q[i]]\inv(i)=\inv(i+1)=\inv'(i,0)$. At line 6, the
controlled rotation fixes $\inv'(i,j)$ because $q[i]$ is in $\ket0$. At
line 13, the swap fixes $\inv''(i)$. The first loop starts from
$\inv(0)=\ket+^{\otimes n}$ and ends with $\inv(n)=\inv''(0)$.

Let $V$ be the unitary applied by $\QFT_n$. For an input state $\ket\psi$,
we use the same outline with the loop invariants $\inv(i)$, $\inv'(i,j)$,
and $\inv''(i)$ taken to be the states obtained from $\ket\psi$ by the gates
of the outline up to the corresponding point. Then the verification
conditions at lines 3, 6, and 13 hold by definition, and
$\inv''(\lfloor n/2\rfloor)=V\ket\psi$. This establishes
\begin{equation}\label{eq:qft-circuit-output}
  \models_{\mathit{tot}}\quad
  \bigl\{\ket{\psi}\bigr\}\ \QFT_n\
  \bigl\{V\ket{\psi}\bigr\}.
\end{equation}

For every program $P$ and states $\ket w$, $\ket u$, and $\ket v$,
\[
 \models_{\mathit{tot}}\{\ket w\}\ P\ \{\ket u\}
 \ \text{ and }\
 \models_{\mathit{tot}}\{\ket w\}\ P\ \{\ket v\}
 \quad\Longrightarrow\quad
 \ketbra u=\ketbra v.
\]
Indeed, for the input $\ket w$, the output expectations of $\ket u$ and
$\ket v$ are both at least one, whereas
$\ketbra u+\ketbra v\le(1+|\langle u\mid v\rangle|)I$ and the total
probability of termination is at most one. Hence
$2\le1+|\langle u\mid v\rangle|\le2$.

Applying this fact to~\eqref{eq:qft-circuit-output} and~\eqref{eq:qfttriple}
gives $V\ket{x}=c_x\ket{\chi(n,x)}$ with $|c_x|=1$.
Together with~\eqref{eq:qft-uniform}, it gives
$V\ket{+}^{\otimes n}=c\ket{0}^{\otimes n}$ with $|c|=1$.
The Fourier sum in Example~\ref{ssec:qft} yields
$\ket{0}^{\otimes n}=2^{-n/2}\sum_x\ket{\chi(n,x)}$.
Since $\ket{+}^{\otimes n}=2^{-n/2}\sum_x\ket{x}$, linearity gives
\[
  2^{-n/2}\sum_x c_x\ket{\chi(n,x)}
  =2^{-n/2}\sum_x c\ket{\chi(n,x)}.
\]
The Fourier basis states are orthonormal, so $c_x=c$ for every $x$.
Thus, for every input state $\ket\psi=\sum_x\psi_x\ket x$,
$V\ket\psi=c\sum_x\psi_x\ket{\chi(n,x)}$.
The common global phase does not change its projector, and (Rule-Conseq)
applied to~\eqref{eq:qft-circuit-output} establishes
\begin{equation}\label{eq:qft-all-input}
  \models_{\mathit{tot}}\quad
  \bigl\{\ket\psi\bigr\}\ \QFT_n\
  \Bigl\{\sum_x\psi_x\ket{\chi(n,x)}\Bigr\}.
\end{equation}

\subsection{Bell-State Preparation}\label{ssec:ex-bell}

The Bell state $\ket{\Phi^+}=(\ket{00}+\ket{11})/\sqrt2$ is the shared
resource of quantum teleportation~\cite{Bennett1993} and superdense
coding~\cite{BennettWiesner1992}, and its preparation is the smallest
entangling circuit. The program applies a Hadamard to $q_0$ and then a
$\CNOT$ with control $q_0$ and target $q_1$. Its Hoare triple is
\[
  \models_{\mathit{tot}}
  \bigl\{\, \ket{00} \,\bigr\}\ \
  H[q_0];\ \CNOT[q_0,q_1]\ \
  \bigl\{\, \ket{\Phi^+} \,\bigr\}.
\]

\begin{proofoutline}{fig:belloutline}{Bell-state preparation}
\begin{outlinerows}

\qassert{0}
  {\mathit{true},\ \ket{00}}
  {}

\qline{1}{0}
  {H[q_0]}
  {(Axiom-Uni)}

\qassert{0}
  {\mathit{true},\ \ket{+} \otimes \ket{0}}
  {}

\qline{2}{0}
  {\CNOT[q_0,q_1]}
  {(Axiom-Uni)}

\qassert{0}
  {\mathit{true},\ \ket{\Phi^+}}
  {}

\end{outlinerows}
\end{proofoutline}

The verification conditions at lines 1 and 2 are the gate actions
$H\ket{0} \otimes \ket{0} = \ket{+} \otimes \ket{0}$ and
$\CNOT\,(\ket{+} \otimes \ket{0}) = \ket{\Phi^+}$.

\subsection{Parallel Hadamard}\label{ssec:ex-phwhile}

Parallel Hadamard prepares the uniform superposition $\ket{+}^{\otimes n}$,
the first step of algorithms such as Grover search~\cite{Grover1996} and
phase estimation. The program $\PH_n$ applies a Hadamard to each quantum
variable $q[i]$ of the quantum array $q$, using the loop counter $i$. Its
Hoare triple is, for every $n \geq 1$,
\[
  \models_{\mathit{tot}}
  \bigl\{\, \ket{0}^{\otimes n} \,\bigr\}\ \PH_n\
  \bigl\{\, \ket{+}^{\otimes n} \,\bigr\}.
\]
The key component of the \hyperref[fig:phwhileoutline]{parallel-Hadamard
proof outline} is the loop at line 2, where we use (Rule-Loop-tot). The
loop invariant is $\{\mathit{true},\inv(i)\}$, where
\[
  \inv(i) \;=\; \bigotimes_{j=0}^{n-1}
    \begin{cases} \ket{+} & j < i\\[1pt] \ket{0} & j \geq i \end{cases},
\]
meaning that the first $i$ quantum variables carry $\ket+$.

\begin{proofoutline}{fig:phwhileoutline}{Parallel Hadamard}
\begin{outlinerows}

\qassert{0}
  {\mathit{true},\ \ket{0}^{\otimes n}}
  {}

\qline{1}{0}
  {i := 0}
  {(Axiom-Ass), (Rule-Conseq)}

\qassert{0}
  {i = 0,\ \inv(i)}
  {}

\qline{2}{0}
  {\kw{while}\ i < n\ \kw{do}}
  {(Rule-Loop-tot), $t = \max(n{-}i,0)$}

\qassert{1}
  {i < n,\ \inv(i)}
  {}

\qline{3}{1}
  {H[q[i]]}
  {(Axiom-Uni)}

\qassert{1}
  {i < n,\ \inv(i{+}1)}
  {}

\qline{4}{1}
  {i := i + 1}
  {(Axiom-Ass), (Rule-Conseq)}

\qassert{1}
  {\mathit{true},\ \inv(i)}
  {}

\qline{5}{0}
  {\kw{od}}
  {}

\qassert{0}
  {i \geq n,\ \inv(i)}
  {}

\qassert{0}
  {\mathit{true},\ \ket{+}^{\otimes n}}
  {(Rule-Conseq)}

\end{outlinerows}
\end{proofoutline}

At line 1, $\inv(0)=\ket{0}^{\otimes n}$. At line 3,
$H[q[i]]\,\inv(i)=\inv(i+1)$ for $i<n$. The loop ends at line 5 with
$i\ge n$, which gives $\inv(i)=\ket{+}^{\otimes n}$.

\subsection{Inverse QFT}\label{ssec:ex-iqft}

The inverse QFT is the readout stage of quantum phase estimation: it
carries the Fourier basis state $\ket{\chi(n,x)}$ back to $\ket{x}$. The
program $\IQFT_n$ reverses the QFT circuit of
Section~\ref{ssec:ex-qft}, with the same classical variables: the loop
counters $i,j$ and the classical input $x$. It first runs the swap loop of
the QFT in the same order, since the swaps act on disjoint pairs of
quantum variables. It then runs the iterations of the first $\mathbf{while}$ loop of the QFT in
reverse order with negated phase angles. Iteration $i<n$ of its second
loop reverses iteration $n-1-i$ of the first loop of the QFT,
and its third loop applies
$\cR(-2\pi/2^{i+1-j})[q[n-1-i],q[n-1-j]]$ for $j<i$, which reverses
iteration $i-1-j$ of the second loop of the QFT. Its Hoare triple is
\[
  \models_{\mathit{tot}}
  \bigl\{\, \ket{\chi(n,x)} \,\bigr\}\ \
  \IQFT_n\ \
  \bigl\{\, \ket{x} \,\bigr\}.
\]

The derivation uses the three loops of the
\hyperref[fig:qftoutline]{QFT proof outline} in reverse order. The first
loop, which performs the swaps, uses the loop invariant $\{i\le\lfloor n/2\rfloor,\inv''(i)\}$, where,
in this example,
$\inv''(i)=\bigl(\prod_{m=0}^{i-1}\SWAP[q[m],q[n{-}1{-}m]]\bigr)\ket{\chi(n,x)}$,
meaning that the first $i$ swaps are applied. It carries
$\ket{\chi(n,x)}$ to $\bigotimes_{m=0}^{n-1}\ket{\psi(\theta(m))} = \inv(n)$.
The second loop uses the loop invariant $\{i\le n,\inv(n{-}i)\}$. In its
iteration $i<n$, the third loop uses $\{j\le i,\inv'(n{-}1{-}i,\,i{-}j)\}$:
it starts from $\inv'(n{-}1{-}i,\,i) = \inv(n{-}i)$ and ends at
$\inv'(n{-}1{-}i,\,0)$. The verification conditions are those of lines 3,
6, and 13 of the QFT proof outline, with each gate replaced by its inverse. Each iteration of the third loop
removes one phase term by
\begin{equation}\label{eq:iqft-phase-step}
  \cR(-\alpha)\bigl(\ket{\psi(\beta + b\,\alpha)} \otimes \ket{b}\bigr)
  \;=\; \ket{\psi(\beta)} \otimes \ket{b}.
\end{equation}
The Hadamard then gives $H\ket{\psi(x[k]\,\pi)} = \ket{x[k]}$ at
$k = n{-}1{-}i$, which yields $\inv(n{-}(i{+}1))$. The second loop ends with
$\inv(0)=\ket{x}$.

\subsection{Quantum Phase Estimation}\label{ssec:ex-qpe}

Quantum phase estimation~\cite{Kitaev1995,Cleve1998} estimates an
eigenphase of a unitary and is the core subroutine of Shor's
algorithm~\cite{Shor1997}. This program writes an exactly representable
eigenphase into the control qubits and preserves the target eigenstate. Fix $n\ge1$, a single-qubit
unitary $U$, and a state $\ket{u}$. The classical input $x$
specifies the eigenphase through the classical precondition
\[
  \varphi \;\equiv\;
  U\ket{u} = e^{2\pi i\,\bar{x}/2^{n}}\,\ket{u},
\]
where $\bar{x}$ is as in Example~\ref{ssec:qft}. Thus $\varphi$ requires
$\ket{u}$ to be an eigenstate of $U$ with eigenphase $\bar{x}/2^n$. The
program $\QPE_n$ acts on the $n$ control qubits $q[0], \ldots, q[n{-}1]$
and the target qubit $q[n]$. It prepares the control qubits by $\PH_n$
(Appendix~\ref{ssec:ex-phwhile}), then runs a loop whose iteration $i$
applies the controlled power $\cU^{2^{\,n-1-i}}$ to $q[i], q[n]$, and
finally applies $\IQFT_n$ (Appendix~\ref{ssec:ex-iqft}) to the control
qubits. Its Hoare triple is
\[
  \models_{\mathit{tot}}
  \bigl\{\, \varphi,\ \ket{0}^{\otimes n} \otimes \ket{u} \,\bigr\}\
  \QPE_n\
  \bigl\{\, \ket{x} \otimes \ket{u} \,\bigr\}.
\]

The key component of the \hyperref[fig:qpeoutline]{QPE proof outline} is
the loop at line 3, where we use (Rule-Loop-tot). The loop invariant is
$\{\varphi\land i\le n,\invL(i)\}$, where
\[
 \invL(i)=\bigotimes_{m=0}^{i-1}\ket{\psi(\theta(n-1-m))}
  \otimes\ket{+}^{\otimes(n-i)}\otimes\ket u,
\]
meaning that the first $i$ control qubits carry their phases.

\begin{proofoutline}{fig:qpeoutline}{Quantum phase estimation}
\begin{outlinerows}

\qassert{0}
  {\varphi,\ \ket{0}^{\otimes n} \otimes \ket{u}}
  {}

\qline{1}{0}
  {\PH_n[q[0], \ldots, q[n{-}1]]}
  {proved in Appendix~\ref{ssec:ex-phwhile}}

\qassert{0}
  {\varphi,\ \ket{+}^{\otimes n} \otimes \ket{u}}
  {}

\qline{2}{0}
  {i := 0}
  {(Axiom-Ass), (Rule-Conseq)}

\qassert{0}
  {\varphi\land i = 0,\ \invL(i)}
  {$\invL(0)=\ket{+}^{\otimes n}\otimes\ket u$}

\qline{3}{0}
  {\kw{while}\ i < n\ \kw{do}}
  {(Rule-Loop-tot), $t = n{-}i$}

\qassert{1}
  {\varphi\land i < n,\ \invL(i)}
  {}

\qline{4}{1}
  {\cU^{2^{\,n-1-i}}[q[i], q[n]]}
  {(Axiom-Uni)}

\qassert{1}
  {\varphi\land i < n,\ \invL(i{+}1)}
  {}

\qline{5}{1}
  {i := i + 1}
  {(Axiom-Ass), (Rule-Conseq)}

\qassert{1}
  {\varphi\land i\le n,\ \invL(i)}
  {}

\qline{6}{0}
  {\kw{od}}
  {}

\qassert{0}
  {\varphi\land i = n,\ \invL(n)}
  {}

\qassert{0}
  {\mathit{true},\ \ket{\chi(n,x)} \otimes \ket{u}}
  {(Rule-Conseq)}

\qline{7}{0}
  {\IQFT_n[q[0], \ldots, q[n{-}1]]}
  {proved in Appendix~\ref{ssec:ex-iqft}}

\qassert{0}
  {\mathit{true},\ \ket{x} \otimes \ket{u}}
  {}

\end{outlinerows}
\end{proofoutline}

Lines 1 and 7 use the Hoare triples of Appendices~\ref{ssec:ex-phwhile}
and~\ref{ssec:ex-iqft} on the control qubits, which leave $\varphi$ and the
target qubit unchanged. At line 4, the control $q[i]$ is in
$\ket{+}=\ket{\psi(0)}$, and $\varphi$ gives
\begin{equation}\label{eq:qpe-phase-step}
\begin{aligned}
  \cU^{2^{\,n-1-i}}\bigl(\ket{\psi(0)} \otimes \ket{u}\bigr)
  &\;=\; \ket{\psi\bigl(2^{\,n-1-i} \cdot 2\pi\bar{x}/2^{n}\bigr)}
    \otimes \ket{u}\\
  &\;=\; \ket{\psi(\theta(n{-}1{-}i))} \otimes \ket{u}.
\end{aligned}
\end{equation}
The gate leaves the other control qubits unchanged, which establishes
$\invL(i+1)$. The loop ends at line 6 with
$\invL(n)=\ket{\chi(n,x)}\otimes\ket u$ by~\eqref{eq:fourierket}.

\subsection{Cluster-State Preparation}\label{ssec:ex-cluster}

This example prepares the 1D cluster state, the resource state
of measurement-based quantum computation~\cite{RaussendorfBriegel2001}. The
program $\mathit{Cluster}_n$ has one classical variable, the loop counter
$k$, which each of its two loops initializes. The first loop applies $H$ to
$q[k]$, and the second applies $\CZ$ to $q[k], q[k{+}1]$. Its Hoare triple
is, for every $n \geq 1$,
\[
  \models_{\mathit{tot}}
  \bigl\{\, \ket{0}^{\otimes n} \,\bigr\}\ \mathit{Cluster}_n\
  \bigl\{\, \ket{C_n} \,\bigr\},
\]
where
\[
  \ket{C_n} \;=\; \frac{1}{\sqrt{2^{\,n}}} \sum_{y \in \{0,1\}^n}
    (-1)^{\sum_{j=0}^{n-2} y_j\,y_{j+1}}\,\ket{y}.
\]

The key components of the \hyperref[fig:clusteroutline]{cluster-state proof
outline} are the loops at lines 2 and 7, where we use (Rule-Loop-tot). The
first loop uses the loop invariant $\{k\le n,\inv(k)\}$, with the
verification conditions of Appendix~\ref{ssec:ex-phwhile}. The second uses
$\{k\le n-1,\inv'(k)\}$, where
\[
  \inv'(k) = \frac{1}{\sqrt{2^{\,n}}} \sum_{y \in \{0,1\}^n}
    (-1)^{\sum_{j=0}^{k-1} y_j\,y_{j+1}}\,\ket{y},
\]
meaning that the first $k$ $\CZ$ gates are applied.

\begin{proofoutline}{fig:clusteroutline}{Cluster-state preparation}
\begin{outlinerows}

\qassert{0}
  {\mathit{true},\ \ket{0}^{\otimes n}}
  {}

\qline{1}{0}
  {k := 0}
  {(Axiom-Ass), (Rule-Conseq)}

\qassert{0}
  {k = 0,\ \inv(k)}
  {$\inv(k) = \ket{+}^{\otimes k}
    \otimes \ket{0}^{\otimes (n-k)}$;
    $\inv(0) = \ket{0}^{\otimes n}$}

\qline{2}{0}
  {\kw{while}\ k < n\ \kw{do}}
  {(Rule-Loop-tot), $t = n{-}k$}

\qassert{1}
  {k < n,\ \inv(k)}
  {}

\qline{3}{1}
  {H[q[k]]}
  {(Axiom-Uni)}

\qassert{1}
  {k < n,\ \inv(k{+}1)}
  {}

\qline{4}{1}
  {k := k + 1}
  {(Axiom-Ass), (Rule-Conseq)}

\qassert{1}
  {k\le n,\ \inv(k)}
  {}

\qline{5}{0}
  {\kw{od}}
  {}

\qassert{0}
  {k = n,\ \inv(n)}
  {}

\qassert{0}
  {\mathit{true},\ \inv'(0)}
  {(Rule-Conseq)}

\qline{6}{0}
  {k := 0}
  {(Axiom-Ass), (Rule-Conseq)}

\qassert{0}
  {k = 0,\ \inv'(k)}
  {$\inv'(k)$: the first $k$ $\CZ$ gates applied}

\qline{7}{0}
  {\kw{while}\ k < n{-}1\ \kw{do}}
  {(Rule-Loop-tot), $t = (n{-}1){-}k$}

\qassert{1}
  {k < n{-}1,\ \inv'(k)}
  {}

\qline{8}{1}
  {\CZ[q[k], q[k{+}1]]}
  {(Axiom-Uni)}

\qassert{1}
  {k < n{-}1,\ \inv'(k{+}1)}
  {}

\qline{9}{1}
  {k := k + 1}
  {(Axiom-Ass), (Rule-Conseq)}

\qassert{1}
  {k\le n-1,\ \inv'(k)}
  {}

\qline{10}{0}
  {\kw{od}}
  {}

\qassert{0}
  {k = n{-}1,\ \inv'(n{-}1)}
  {}

\qassert{0}
  {\mathit{true},\ \ket{C_n}}
  {(Rule-Conseq)}

\end{outlinerows}
\end{proofoutline}

By definition, $\inv'(0)=\ket{+}^{\otimes n}$ and
$\inv'(n-1)=\ket{C_n}$. At line 8, $\CZ$ multiplies each computational
basis state $\ket y$ by $(-1)^{y_k y_{k+1}}$, so
$\CZ[q[k],q[k+1]]\,\inv'(k)=\inv'(k+1)$ for $k<n-1$ by linearity.

\subsection{Building Blocks for Fermi--Hubbard Simulation}\label{ssec:ex-fermion}
The Fermi--Hubbard model is a standard model of interacting
electrons~\cite{Hubbard1963} and a benchmark for quantum
simulation~\cite{Wecker2015Hubbard}. We consider its 1D spinless version
on an open chain, with nearest-neighbor hopping and density interactions,
and verify three building blocks for simulating it:
(a) encode the fermionic
states and Hamiltonian into qubits by the Bravyi--Kitaev (BK)
transformation; (b) block-encode the
encoded Hamiltonian by an LCU circuit; and (c) apply a QSP polynomial transformation to that
encoding~\cite{LowChuang2019,GilyenSuLowWiebe2019}. The QSP program composes
them: it runs the BK circuit on its input and calls the LCU circuit in each
iteration.

\subsubsection{The Bravyi--Kitaev Transformation}\label{ssec:ex-bk}

The Bravyi--Kitaev transformation maps fermionic operators to qubit
operators acting on $O(\log n)$ qubits, compared with $O(n)$ for the
Jordan--Wigner transformation~\cite{SeeleyRichardLove2012}. This example
verifies the circuit of the Bravyi--Kitaev transformation, the invertible matrix $\beta_{2^x}$ over $\mathbb{Z}_2$
given by its block recursion:
\[
  \beta_{1} = \begin{pmatrix} 1 \end{pmatrix},
  \qquad
  \beta_{2^{x+1}}
    = \begin{pmatrix} \beta_{2^{x}} & 0 \\[2pt] C_{2^{x}} & \beta_{2^{x}}
      \end{pmatrix},
\]
where $C_{2^x}$ is zero except for its last row, which is all ones.
The circuit computes $\beta_{2^x}f$ in
place~\cite{SeeleyRichardLove2012}, level by level. The program has two
classical variables, the level counter $\ell$ and the block counter $k$.
The first $\mathbf{while}$ loop runs over the $x$ levels, and the second runs over the
blocks of one level. Each iteration of the second loop is a conditional. If block $k$
exists at level $\ell$, that is, $(k{+}1)2^{\ell+1}\le 2^{x}$, it applies a $\CNOT$ with control
$q[\mathit{ctl}(\ell,k)]$ and target $q[\mathit{tgt}(\ell,k)]$, where
$\mathit{ctl}(\ell,k) = k \cdot 2^{\ell+1} + 2^{\ell} - 1$ and
$\mathit{tgt}(\ell,k) = k \cdot 2^{\ell+1} + 2^{\ell+1} - 1$. Otherwise,
it skips. Its Hoare triple is, for every $x$ and every input state
$\ket{\psi}$ of the $2^{x}$ qubits,
\[
  \models_{\mathit{tot}}
  \{\ket{\psi}\}\ \mathit{BK}_x\ \{\beta_{2^x}\ket{\psi}\}.
\]
Here, $\ket{f}$ is the computational basis state of the $2^{x}$ qubits
for $f \in \{0,1\}^{2^{x}}$, and we also write $\beta_{2^x}$ for the unitary
that maps each $\ket{f}$ to $\ket{\beta_{2^x}f}$.

The key components of the \hyperref[fig:bkoutline]{Bravyi--Kitaev proof
outline} are the loops at lines 2 and 4, where we use (Rule-Loop-tot). The
first loop uses the loop invariant $\{\mathit{true},\inv(\ell)\}$, where
$\inv(\ell)$ is the state obtained by applying the first $\ell$ levels of
the circuit to $\ket{\psi}$. The second loop uses $\{\ell < x\wedge k\le 2^x,\inv'(\ell,k)\}$,
where $\inv'(\ell,k)$ is obtained by applying the first $k$ iterations of
level $\ell$ to $\inv(\ell)$.

\begin{proofoutline}{fig:bkoutline}{Bravyi--Kitaev transformation}
\begin{outlinerows}

\qassert{0}
  {\mathit{true},\ \ket{\psi}}
  {}

\qline{1}{0}
  {\ell := 0}
  {(Axiom-Ass), (Rule-Conseq)}

\qassert{0}
  {\ell = 0,\ \inv(\ell)}
  {$\inv(0) = \ket{\psi}$}

\qline{2}{0}
  {\kw{while}\ \ell < x\ \kw{do}}
  {(Rule-Loop-tot), $t = \max(x{-}\ell,0)$}

\qassert{1}
  {\ell < x,\ \inv(\ell)}
  {}

\qline{3}{1}
  {k := 0}
  {(Axiom-Ass), (Rule-Conseq)}

\qassert{1}
  {\ell < x \wedge k = 0,\ \inv'(\ell,k)}
  {$\inv'(\ell,0)=\inv(\ell)$}

\qline{4}{1}
  {\kw{while}\ k < 2^{x}\ \kw{do}}
  {(Rule-Loop-tot), $t = 2^{x}{-}k$}

\qassert{2}
  {\ell < x \wedge k < 2^{x},\ \inv'(\ell,k)}
  {}

\qline{5}{2}
  {\kw{if}\ (k{+}1) 2^{\ell+1} \le 2^{x}}
  {(Rule-Cond)}

\qassert{2}
  {\ell < x \wedge k < 2^{x} \wedge
    (k{+}1) 2^{\ell+1} \le 2^{x},\ \inv'(\ell,k)}
  {first premise}

\qline{6}{2}
  {\kw{then}\ \CNOT[q[\mathit{ctl}], q[\mathit{tgt}]]}
  {(Axiom-Uni)}

\qassert{2}
  {\ell < x \wedge k < 2^{x} \wedge
    (k{+}1) 2^{\ell+1} \le 2^{x},\ \inv'(\ell,k+1)}
  {}

\qassert{2}
  {\ell < x \wedge k < 2^{x} \wedge
    (k{+}1) 2^{\ell+1} > 2^{x},\ \inv'(\ell,k)}
  {second premise}

\qline{7}{2}
  {\kw{else}\ \kw{skip}}
  {(Axiom-Ski), (Rule-Conseq)}

\qassert{2}
  {\ell < x \wedge k < 2^{x} \wedge
    (k{+}1) 2^{\ell+1} > 2^{x},\ \inv'(\ell,k+1)}
  {}

\qassert{2}
  {\ell < x \wedge k < 2^{x},\ \inv'(\ell,k+1)}
  {}

\qline{8}{2}
  {k := k + 1}
  {(Axiom-Ass), (Rule-Conseq)}

\qassert{2}
  {\ell < x\wedge k\le 2^x,\ \inv'(\ell,k)}
  {}

\qline{9}{1}
  {\kw{od}}
  {}

\qassert{1}
  {\ell < x \wedge k = 2^{x},\ \inv'(\ell,2^{x})}
  {}

\qassert{1}
  {\ell < x,\ \inv(\ell{+}1)}
  {(Rule-Conseq)}

\qline{10}{1}
  {\ell := \ell + 1}
  {(Axiom-Ass), (Rule-Conseq)}

\qassert{1}
  {\mathit{true},\ \inv(\ell)}
  {}

\qline{11}{0}
  {\kw{od}}
  {}

\qassert{0}
  {\ell \geq x,\ \inv(\ell)}
  {}

\qassert{0}
  {\mathit{true},\ \beta_{2^x}\ket{\psi}}
  {(Rule-Conseq)}

\end{outlinerows}
\end{proofoutline}

At line 6, the $\CNOT$ carries $\inv'(\ell,k)$ to $\inv'(\ell,k+1)$. At
line 7, block $k$ does not exist, so $\inv'(\ell,k+1)=\inv'(\ell,k)$. The
second loop ends with $\inv'(\ell,2^{x})=\inv(\ell+1)$. After line 11,
$\ell\ge x$, and $\inv(\ell)=\inv(x)$ because no block exists at a level
$\ell\ge x$. Each $\CNOT$ updates one bit of a computational basis state,
so the first $\ell$ levels map $\ket{f}$ to $\ket{\mathrm{net}_{\ell}\, f}$,
where $\mathrm{net}_{\ell}\, f$ is obtained from $f$ by the bit updates of
the first $\ell$ levels. Finally, $\mathrm{net}_{x}\, f=\beta_{2^x} f$
states that the $x$ levels compute $\beta_{2^x}$, so
$\inv(x)=\beta_{2^x}\ket{\psi}$ by linearity.

\subsubsection{The Fermi--Hubbard LCU Block Encoding}\label{ssec:ex-lcu}

Block encoding by a linear combination of unitaries (LCU) embeds a
Hamiltonian, scaled by $1/\lambda$, in a block of a unitary, and is the
input model of quantum signal
processing. This example
block-encodes the Fermi--Hubbard Hamiltonian by a program with nested
loops and classically computed angles and signs. The fermionic operators
$a_j,a_j^\dagger$ are linear combinations of Pauli strings under the BK
encoding, and we prove that they satisfy the fermionic anticommutation
relations. For $n=2^x$, the Hamiltonian is
\[
  H \;=\; -t_{\mathrm h} \sum_{j=0}^{n-2}\bigl(a^{\dagger}_j a_{j+1}
      + a^{\dagger}_{j+1} a_j\bigr)
    \;+\; U \sum_{j=0}^{n-2} n_j\, n_{j+1}
  \;=\; \sum_{l=0}^{N-1} c_l\, P_l,
  \qquad n_j = a^{\dagger}_j a_j,
\]
with $N = 6(n-1)$ Pauli strings $P_l$, six per adjacent pair, and real
coefficients $c_l$ computed from $(t_{\mathrm h}, U)$. Let
\begin{align*}
 \lambda &= \sum_{l=0}^{N-1}|c_l|, &
 \omega_l &= \begin{cases}
   \sqrt{|c_l|/\lambda},&\lambda>0,\\
   \delta_{l,l_0},&\lambda=0,
 \end{cases}\\
 \tau_j &= \Bigl(\sum_{l=j}^{N-1}\omega_l^2\Bigr)^{1/2}, &
 \theta_j &= \begin{cases}
   \arccos(\omega_j/\tau_j),&\tau_j>0,\\
   \pi/2,&\tau_j=0,
 \end{cases}
\end{align*}
where $l_0$ is a fixed index, and let $\mathit{sg}_l=1$ for $c_l\ge0$ and
$\mathit{sg}_l=-1$ for $c_l<0$. Let $G(\theta)$ be the two-qubit rotation
by $\theta$ in the subspace spanned by $\ket{01}$ and $\ket{10}$,
\[
  G(\theta)\ket{10} = \cos\theta\,\ket{10} + \sin\theta\,\ket{01},
  \qquad
  G(\theta)\ket{01} = -\sin\theta\,\ket{10} + \cos\theta\,\ket{01},
\]
which acts as the identity on $\ket{00}$ and $\ket{11}$.
The sign gate on $a[l]$ has matrix $\operatorname{diag}(1,\mathit{sg}_l)$.

The program acts on $n$ system qubits $q[s]$ and $N$ ancilla qubits
$a[l]$, one per Hamiltonian term. We write $\ket{1_l}_a$ for the basis
state with ancilla qubit $l$ in $\ket{1}$ and every other ancilla qubit in
$\ket{0}$. The classical variables are the loop counters and the
Hamiltonian coefficients $(t_{\mathrm h}, U)$.

The program begins with the ancilla initialization
$\bar{a} := \ket{0}$, where $\bar{a}$ is the sequence of all $N$
ancilla qubits. It continues with the circuit $\LCU$, which has the
standard prepare--select--unprepare structure. Line 1 applies $X$ to
$a[0]$, and the first $\mathbf{while}$ loop (lines 2--6) applies the
rotations $G(\theta_j(\sigma))$ to $a[j], a[j{+}1]$, which carries
$\ket{1_0}_a$ to the state with amplitude $\omega_l(\sigma)$ on each
$\ket{1_l}_a$. The second loop (lines 7--12) runs over the $N$ ancilla
qubits. Its iteration $l$ applies the sign gate to $a[l]$ and then,
controlled on $a[l]$, the single-qubit factors of $P_l$ to the system
qubits, one factor per iteration of the third loop. The fourth loop
(lines 13--17) applies the inverse rotations
$G(-\theta_{N-2-j}(\sigma))$ at the reversed positions, and line 18
applies $X$ to $a[0]$ again.

The four loops use the loop invariants $\{j\le N-1,\inv_{(0)}(j)\}$,
$\{l\le N,\inv_{(1)}(l)\}$, $\{l<N\wedge s\le n,\inv_{(2)}(l,s)\}$, and
$\{j\le N-1,\inv_{(3)}(j)\}$, where
\begin{align*}
 \inv_{(0)}(j)&=\textstyle\bigl(\sum_{i<j}\omega_i\ket{1_i}_a
   +\tau_j\ket{1_j}_a\bigr)\otimes\ket\psi,
 \qquad
 \inv_{(3)}(j)=G^{-1}_{<j}\textstyle\sum_{l}
 \omega_l\mathit{sg}_l\ket{1_l}_a\otimes P_l\ket\psi,\\
 \inv_{(1)}(l)&=\textstyle\sum_{i<l}\omega_i\mathit{sg}_i\ket{1_i}_a\otimes P_i\ket\psi
   +\sum_{i\ge l}\omega_i\ket{1_i}_a\otimes\ket\psi,\\
 \inv_{(2)}(l,s)&=\textstyle\sum_{i<l}\omega_i\mathit{sg}_i\ket{1_i}_a\otimes P_i\ket\psi
 +\omega_l\mathit{sg}_l\ket{1_l}_a\otimes P_{l,<s}\ket\psi
 +\sum_{i>l}\omega_i\ket{1_i}_a\otimes\ket\psi.
\end{align*}
Here, $P_{l,<s}$ is the product of the first $s$ single-qubit factors of
$P_l$, $(P_l)_s$ is the factor acting on $q[s]$, and $G^{-1}_{<j}$ is the
product of the first $j$ inverse rotations of the fourth loop. The
invariant $\inv_{(0)}(j)$ means that the ancilla amplitudes
$\omega_0,\ldots,\omega_{j-1}$ are prepared, and $\inv_{(1)}(l)$ means
that the signed Pauli strings of the first $l$ terms are applied. The
\hyperref[fig:lcuoutline]{LCU proof outline} derives
\[
  \models_{\mathit{tot}}
  \bigl\{\, \ket{0\cdots0}_a \otimes \ket{\psi} \,\bigr\}\ \
  \LCU\ \
  \bigl\{\, \ket{\Phi_{\mathrm{out}}(\sigma)} \,\bigr\},
  \qquad
  \Phi_{\mathrm{out}}(\sigma)=X_{a[0]}\,\inv_{(3)}(N{-}1),
\]
for every $x > 0$ and every input state $\ket\psi$ of the system qubits.

\begin{proofoutline}{fig:lcuoutline}{LCU block encoding}
\begin{outlinerows}[\scriptsize\setlength{\tabcolsep}{2pt}]

\qassert{0}
  {\mathit{true},\ \ket{0\cdots0}_a \otimes \ket{\psi}}
  {}

\qline{1}{0}
  {X[a[0]]}
  {(Axiom-Uni)}

\qassert{0}
  {\mathit{true},\ \ket{1_0}_a \otimes \ket{\psi}}
  {}

\qline{2}{0}
  {j := 0}
  {(Axiom-Ass), (Rule-Conseq)}

\qassert{0}
  {j = 0,\ \inv_{(0)}(j)}
  {$\tau_0=1$}

\qline{3}{0}
  {\kw{while}\ j < N{-}1\ \kw{do}}
  {(Rule-Loop-tot), $t = (N{-}1){-}j$}

\qassert{1}
  {j < N{-}1,\ \inv_{(0)}(j)}
  {}

\qline{4}{1}
  {G(\theta_j(\sigma))[a[j], a[j{+}1]]}
  {(Axiom-Uni)}

\qassert{1}
  {j < N{-}1,\ \inv_{(0)}(j{+}1)}
  {}

\qline{5}{1}
  {j := j + 1}
  {(Axiom-Ass), (Rule-Conseq)}

\qassert{1}
  {j\le N-1,\ \inv_{(0)}(j)}
  {}

\qline{6}{0}
  {\kw{od}}
  {}

\qassert{0}
  {j = N{-}1,\ \inv_{(0)}(N{-}1)}
  {}

\qassert{0}
  {\mathit{true},\ \textstyle\sum_{l=0}^{N-1} \omega_l(\sigma)\,\ket{1_l}_a \otimes \ket{\psi}}
  {(Rule-Conseq)}

\qline{7}{0}
  {l := 0}
  {(Axiom-Ass), (Rule-Conseq)}

\qassert{0}
  {l = 0,\ \inv_{(1)}(l)}
  {}

\qline{8}{0}
  {\kw{while}\ l < N\ \kw{do}}
  {(Rule-Loop-tot), $t = N{-}l$}

\qassert{1}
  {l < N,\ \inv_{(1)}(l)}
  {}

\qline{9}{1}
  {\mathit{sg}_l(\sigma)[a[l]]}
  {(Axiom-Uni)}

\qassert{1}
  {l < N,\ \inv_{(2)}(l,0)}
  {}

\qline{10}{1}
  {s := 0;\ \kw{while}\ s < n\ \kw{do}}
  {(Rule-Loop-tot), $t = n{-}s$, $\inv_{(2)}(l,s)$}

\qline{}{2}
  {\ctrl{(P_l)_s}[a[l], q[s]];\ s := s{+}1\ \kw{od}}
  {}

\qassert{1}
  {l < N,\ \inv_{(1)}(l{+}1)}
  {}

\qline{11}{1}
  {l := l + 1}
  {(Axiom-Ass), (Rule-Conseq)}

\qassert{1}
  {l\le N,\ \inv_{(1)}(l)}
  {}

\qline{12}{0}
  {\kw{od}}
  {}

\qassert{0}
  {l = N,\ \inv_{(1)}(N)}
  {}

\qassert{0}
  {\mathit{true},\ \textstyle\sum_{l=0}^{N-1} \omega_l(\sigma)\,\mathit{sg}_l(\sigma)\,\ket{1_l}_a \otimes P_l\ket{\psi}}
  {(Rule-Conseq)}

\qline{13}{0}
  {j := 0}
  {(Axiom-Ass), (Rule-Conseq)}

\qassert{0}
  {j = 0,\ \inv_{(3)}(j)}
  {}

\qline{14}{0}
  {\kw{while}\ j < N{-}1\ \kw{do}}
  {(Rule-Loop-tot), $t = (N{-}1){-}j$}

\qassert{1}
  {j < N{-}1,\ \inv_{(3)}(j)}
  {}

\qline{15}{1}
  {G(-\theta_{N-2-j}(\sigma))[a[N{-}2{-}j], a[N{-}1{-}j]]}
  {(Axiom-Uni)}

\qassert{1}
  {j < N{-}1,\ \inv_{(3)}(j{+}1)}
  {}

\qline{16}{1}
  {j := j + 1}
  {(Axiom-Ass), (Rule-Conseq)}

\qassert{1}
  {j\le N-1,\ \inv_{(3)}(j)}
  {}

\qline{17}{0}
  {\kw{od}}
  {}

\qassert{0}
  {j = N{-}1,\ \inv_{(3)}(N{-}1)}
  {}

\qassert{0}
  {\mathit{true},\ \inv_{(3)}(N{-}1)}
  {(Rule-Conseq)}

\qline{18}{0}
  {X[a[0]]}
  {(Axiom-Uni)}

\qassert{0}
  {\mathit{true},\ \ket{\Phi_{\mathrm{out}}(\sigma)}}
  {}

\end{outlinerows}
\end{proofoutline}

The verification condition for the rotation at line 4 is
\begin{equation}\label{eq:lcu-rotation-step}
  G(\theta_j(\sigma))[a[j],a[j{+}1]]\,\bigl(\tau_j\,\ket{1_j}_a\bigr)
  \;=\; \omega_j\,\ket{1_j}_a + \tau_{j+1}\,\ket{1_{j+1}}_a,
\end{equation}
which carries $\inv_{(0)}(j)$ to $\inv_{(0)}(j{+}1)$. Since the fourth loop
applies the inverses of the rotations of the first loop in reverse order,
$(\bra{1_0}_a\otimes I)\,G^{-1}_{<N-1}=\sum_{l}\omega_l(\sigma)
(\bra{1_l}_a\otimes I)$. Together with $\inv_{(3)}(0)=\inv_{(1)}(N)$ and
$\omega_l^2\,\mathit{sg}_l=c_l/\lambda$, this gives, for $\lambda(\sigma)>0$,
the block-encoding identity
\begin{equation}\label{eq:lcu-block-target}
  \bigl(\bra{0\cdots0}_a \otimes I\bigr)\,
  \ket{\Phi_{\mathrm{out}}(\sigma)}
  \;=\; \lambda(\sigma)^{-1}\, H\ket{\psi}.
\end{equation}

For every quantum predicate $K$ on the system qubits, $0\le K\le I$, and
every $x>0$, we derive
\begin{equation}\label{eq:lcu-predicate-triple}
\begin{aligned}
  \models_{\mathit{tot}}\
  &\bigl\{\, \lambda(\sigma)>0,\
    I_a\otimes\lambda(\sigma)^{-2}H^\dagger K H\,\bigr\}\
  \bar a:=\ket0;\ \LCU\\
  &\bigl\{\,\ketbra{0\cdots0}_a\otimes K\,\bigr\}.
\end{aligned}
\end{equation}

We derive this Hoare triple backward from the postcondition
$A=\ketbra{0\cdots0}_a\otimes K$. Each gate and the initialization use the
precondition given by their axioms. A loop with bound $M$ uses the loop
invariant $W_j^\dagger C W_j$, where $C$ is the quantum predicate required
after the loop and $W_j$ is the unitary applied by its iterations
$j,\ldots,M-1$, which is $I$ for $j\ge M$. Its verification conditions are
\[
 W_j=I\ \text{ for } j\ge M,\qquad
 U_j^\dagger(W_{j+1}^\dagger C W_{j+1})U_j=W_j^\dagger C W_j\ \text{ for } j<M,
\]
where $U_j$ is the unitary applied by iteration $j$. The third loop is
derived in the same way. For an initialization or a circuit $P$ and a
quantum predicate $A$, we write $\operatorname{wp}(P,A)$ for the
precondition computed from $A$ by the axioms (Section~\ref{ssec:lean-rules})
and these loop invariants. The Hoare
triple~\eqref{eq:lcu-predicate-triple} then follows from (Axiom-Init),
the \hyperref[fig:lcuoutline]{LCU proof outline} with these loop
invariants, and (Rule-Conseq), whose premise
\[
 I_a\otimes\lambda(\sigma)^{-2}H^\dagger K H
 \le \operatorname{wp}(\bar a:=\ket0;\LCU,A)(\sigma)
 \qquad(\lambda(\sigma)>0)
\]
holds with equality by the block-encoding
identity~\eqref{eq:lcu-block-target}. Hence the precondition
of~\eqref{eq:lcu-predicate-triple} equals
$\operatorname{wp}(\bar a:=\ket0;\LCU,A)$. The same derivation
applies to $I-A$. Since the total output trace is at most the input trace,
the expectation of $A$ in the output equals that of the precondition in
the input. Taking $K=I$ gives the
success probability $\lambda^{-2}\|H\ket\psi\|^2$.
As $K$ ranges over all quantum predicates, the precondition determines
the successful output $\lambda^{-2}H\ketbra{\psi}H^\dagger$.

\subsubsection{The QSP Polynomial Transformation}\label{ssec:ex-qsp}

Quantum signal processing (QSP) applies polynomial transformations to a
block-encoded operator and underlies efficient Hamiltonian
simulation~\cite{LowChuang2019,GilyenSuLowWiebe2019}. The QSP program
alternates the LCU circuit of
Appendix~\ref{ssec:ex-lcu} with phase gates and reflections on the ancilla
qubit $a[0]$. After the ancilla initialization
$\bar{a} := \ket{0}$, the program $\QSP_d$ applies $\mathit{BK}_x$
(Appendix~\ref{ssec:ex-bk}) to the system qubits and then runs a loop of
depth $d$ on the counter $k$. Iteration $k$ applies the phase gate
$\phasegate(\Phi[k])[a[0]]$, with the angle $\Phi[k]$ given by the
classical array $\Phi$, then $\LCU$, then $Z[a[0]]$, and increments $k$.

Let $e_\psi=\ket{0\cdots0}_a\otimes\beta_{2^x}\ket\psi$, and let $U_{\LCU}$, $\Ph(\alpha)$,
and $R_Z$ be the unitaries applied by $\LCU$, $\phasegate(\alpha)[a[0]]$,
and $Z[a[0]]$; the single-qubit matrix of $\phasegate(\alpha)$ is
$\operatorname{diag}(1,e^{i\alpha})$. Let
$\Pi_m=\prod_{k=0}^{m-1}R_Z\,U_{\LCU}\,\Ph(\Phi[k])$~\cite{GilyenSuLowWiebe2019}, with the $k=0$
factor acting first. The \hyperref[fig:qspoutline]{QSP proof outline}
derives
\[
  \models_{\mathit{tot}}
  \bigl\{\, \ket{0\cdots0}_a \otimes \ket{\psi} \,\bigr\}\ \
  \QSP_d\ \
  \bigl\{\, \ket{\Xi_{\mathrm{out}}(\sigma)} \,\bigr\},
  \qquad
  \Xi_{\mathrm{out}}(\sigma) \;=\; \Pi_d\, e_{\psi},
\]
for every $x>0$, depth $d$, and input state $\ket\psi$. The key component
of the outline is the loop at line 3, where we use (Rule-Loop-tot). The
loop invariant is $\{k\le d,\invL(k)\}$, where $\invL(k)=\Pi_k\,e_\psi$.

\begin{proofoutline}{fig:qspoutline}{QSP polynomial transformation}
\begin{outlinerows}

\qassert{0}
  {\mathit{true},\ \ket{0\cdots0}_a \otimes \ket{\psi}}
  {}

\qline{1}{0}
  {\mathit{BK}_x[q[0], \ldots, q[n{-}1]]}
  {proved in Appendix~\ref{ssec:ex-bk}}

\qassert{0}
  {\mathit{true},\ e_{\psi}}
  {}

\qline{2}{0}
  {k := 0}
  {(Axiom-Ass), (Rule-Conseq)}

\qassert{0}
  {k = 0,\ \invL(k)}
  {$\Pi_0=I$}

\qline{3}{0}
  {\kw{while}\ k < d\ \kw{do}}
  {(Rule-Loop-tot), $t = d{-}k$}

\qassert{1}
  {k < d,\ \invL(k)}
  {}

\qline{4}{1}
  {\phasegate(\Phi[k])[a[0]]}
  {(Axiom-Uni)}

\qassert{1}
  {k < d,\ \Ph(\Phi[k])\,\Pi_k\, e_{\psi}}
  {}

\qline{5}{1}
  {\LCU}
  {as in the \hyperref[fig:lcuoutline]{LCU proof outline}}

\qassert{1}
  {k < d,\ U_{\LCU}\,\Ph(\Phi[k])\,\Pi_k\, e_{\psi}}
  {}

\qline{6}{1}
  {Z[a[0]]}
  {(Axiom-Uni)}

\qassert{1}
  {k < d,\ \invL(k{+}1)}
  {}

\qline{7}{1}
  {k := k + 1}
  {(Axiom-Ass), (Rule-Conseq)}

\qassert{1}
  {k\le d,\ \invL(k)}
  {}

\qline{8}{0}
  {\kw{od}}
  {}

\qassert{0}
  {k = d,\ \invL(k)}
  {}

\qassert{0}
  {\mathit{true},\ \Pi_d\, e_{\psi}}
  {(Rule-Conseq)}

\end{outlinerows}
\end{proofoutline}

Line 1 uses the Hoare triple of Appendix~\ref{ssec:ex-bk} on the system
qubits, which leaves the ancilla qubits unchanged. At line 5, the loops of
$\LCU$ use as loop invariants the states obtained
by applying the gates of each loop up to the current iteration to
$\Ph(\Phi[k])\,\Pi_k\,e_\psi$. The verification conditions hold by the
definition of $\Pi_k$: $\Pi_0\,e_\psi=e_\psi$, each iteration maps
$\Pi_k\,e_\psi$ to $R_Z\,U_{\LCU}\,\Ph(\Phi[k])\,\Pi_k\,e_\psi=\Pi_{k+1}\,e_\psi$,
and the loop ends with $\Pi_d\,e_\psi=\Xi_{\mathrm{out}}(\sigma)$.

Let $\ket v$ be an eigenstate with $H\ket v=E\ket v$, and assume $\lambda>0$
and $|E|<\lambda$. We prove that, on the states the loop reaches from
$\ket{0\cdots0}_a\otimes\ket v$, $R_Z$ is the reflection about the subspace
with all ancilla qubits in $\ket0$, and that the loop preserves a
two-dimensional subspace containing
$\ket{0\cdots0}_a\otimes\ket v$~\cite{GilyenSuLowWiebe2019}. In a basis of
this subspace that starts with $\ket{0\cdots0}_a\otimes\ket v$, the
upper-left entry of $\Pi_d$ is, with $c=E/\lambda$,
\begin{equation}\label{eq:qsp-polynomial-entry}
 P_{\Phi}(c)=\left[
  \prod_{k=0}^{d-1}
  \begin{pmatrix}
   c & e^{i\Phi[k]}\sqrt{1-c^2}\\
   -\sqrt{1-c^2} & e^{i\Phi[k]}c
  \end{pmatrix}
 \right]_{00},
\end{equation}
with the $k=0$ factor acting first, as in $\Pi_d$. Hence
\[
 (\bra{0\cdots0}_a\otimes I)\,\Pi_d\,(\ket{0\cdots0}_a\otimes\ket v)
 =P_\Phi(E/\lambda)\ket v.
\]

For arbitrary inputs, assume $\lambda>0$ and $|E|<\lambda$ for every
eigenvalue $E$ of $H$, and define $P_\Phi(H/\lambda)$ by
$\ket{v}\mapsto P_\Phi(E/\lambda)\ket{v}$ for each eigenstate $\ket v$ with
$H\ket v=E\ket v$. By linearity, the eigenstate
identity extends to
\[
 (\bra{0\cdots0}_a\otimes I)\Pi_d
 (\ket{0\cdots0}_a\otimes I)=P_\Phi(H/\lambda).
\]
Including the initial BK transformation, the block is therefore
\[
 (\bra{0\cdots0}_a\otimes I)\Pi_d
 (I_a\otimes\beta_{2^x})(\ket{0\cdots0}_a\otimes I)
 =P_\Phi(H/\lambda)\beta_{2^x}.
\]
Let
$\varphi_{\mathrm{sp}}$ denote these spectral hypotheses. They hold, for
example, when $x=1$ and $t_h=U=1$. For every quantum predicate $K$ on the
system qubits, $0\le K\le I$, we derive
\begin{equation}\label{eq:qsp-predicate-triple}
\begin{aligned}
 \models_{\mathit{tot}}\
 &\{\varphi_{\mathrm{sp}},\
 I_a\otimes \beta_{2^x}^\dagger P_\Phi(H/\lambda)^\dagger K P_\Phi(H/\lambda)\,\beta_{2^x}\}\
 \bar a:=\ket0;\ \QSP_d\\
 &\{\ketbra{0\cdots0}_a\otimes K\}.
\end{aligned}
\end{equation}
We derive it in the same way as~\eqref{eq:lcu-predicate-triple}, with
$\QSP_d$ in place of $\LCU$ and $A=\ketbra{0\cdots0}_a\otimes K$.
Its loop invariant is $W_k^\dagger A W_k$, where
$W_k=\prod_{r=k}^{d-1}R_ZU_{\LCU}\Ph(\Phi[r])$, with the $r=k$ factor
acting first, and the bound function is $\max(d-k,0)$. The initial (Rule-Conseq)
premise holds with equality:
\[
 \operatorname{wp}(\bar a:=\ket0;\QSP_d,A)
 =I_a\otimes \beta_{2^x}^\dagger P_\Phi(H/\lambda)^\dagger K P_\Phi(H/\lambda)\,\beta_{2^x}
 \qquad(\varphi_{\mathrm{sp}}).
\]
Methods for choosing $\Phi$ so that $P_\Phi$ approximates a target
polynomial are given in~\cite{LowChuang2019,GilyenSuLowWiebe2019}.

\end{document}